\documentclass[10pt,a4paper,draft]{article}

\usepackage[utf8]{inputenc}
\usepackage[T1]{fontenc}
\usepackage[english]{babel}
\usepackage{lmodern}

\usepackage[a4paper, left=20mm, right=20mm, top=2cm, bottom=2cm]{geometry}
\usepackage{setspace}
\addto\extrasenglish{%
}
\newcommand{\appendixnumberline}[1]{Appendix #1\quad}
\usepackage{amsmath,amsthm,mathdots,amssymb,bm,bbm,mathrsfs,mathtools}
\usepackage[final]{graphicx} % Remove the 'draft' option
\usepackage{subcaption}
\usepackage{tikz,pgfplots}\pgfplotsset{compat=1.16}
\usepackage{thm-restate}
\newtheorem{thm}{Theorem}%[section]
\newtheorem{cor}{Corollary}%[section]
\newtheorem{lem}{Lemma}%[section]

\newtheorem{prop}{Proposition}%[section]

\newtheorem{rmk}{Remark}%[thm]
\newtheorem{defn}{Definition}%[section]

\usepackage{comment}

\usepackage[colorlinks=true, urlcolor=violet, linkcolor=blue, citecolor=red, hyperindex=true, linktocpage=true, pagebackref=true, draft=false]{hyperref}
\usepackage{mleftright}\mleftright % fixes bracket spacing

\usepackage{enumitem}
\usepackage{microtype,color,graphicx} %cleveref
\usepackage{tikz-cd}
\usepackage{xcolor}
\usepackage{multirow}
\usepackage{makecell}
\usepackage{tablefootnote}

\renewcommand{\vec}{\bm}
\newcommand{\CA}{\mathcal{A}}

\newcommand{\CE}{\mathcal{E}}
\newcommand{\BE}{\mathbb{E}}
\newcommand{\CF}{\mathcal{F}}

\newcommand{\CH}{\mathcal{H}}

\newcommand{\CL}{\mathcal{L}}

\newcommand{\CN}{\mathcal{N}}

\newcommand{\CO}{\mathcal{O}}

\newcommand{\BR}{\mathbb{R}}

\newcommand{\CT}{\mathcal{T}}

\newcommand{\CW}{\mathcal{W}}

\newcommand*{\veps}{\varepsilon}
\newcommand*{\veta}{\varepsilon}

\newcommand{\vA}{\bm{A}}

\newcommand{\vH}{\bm{H}}
\newcommand{\vI}{\bm{I}}

\newcommand{\vK}{\bm{K}}

\newcommand{\vL}{\bm{L}}

\newcommand{\vU}{\bm{U}}

\newcommand{\vsigma}{\bm{ \sigma}}

\newcommand{\vrho}{\bm{ \rho}}
\renewcommand{\L}{\left}
\newcommand{\R}{\right}

\newcommand{\tCO}{\tilde{\CO}}

\newcommand{\dagg}{\dagger}

\newcommand{\vertiii}[1]{{\left\vert\kern-0.25ex\left\vert\kern-0.25ex\left\vert #1 \right\vert\kern-0.25ex\right\vert\kern-0.25ex\right\vert}}
\newcommand{\norm}[1]{\Vert {#1} \Vert}

\newcommand{\labs}[1]{\left\vert {#1} \right\vert}

\newcommand{\ri}{\mathrm{i}}
\newcommand{\rd}{\mathrm{d}}
\newcommand*{\tr}{\mathrm{Tr}}
\newcommand*{\poly}{\mathrm{Poly}}

\newcommand{\nrm}[1]{\left\| #1 \right\|}

\newcommand{\undersetbrace}[2]{ \underset{#1}{\underbrace{#2}}}

\DeclarePairedDelimiterX{\braket}[1]{\langle}{\rangle}{#1}
\DeclarePairedDelimiterX\ketbra[2]{| }{|}{#1 \delimsize\rangle\!\delimsize\langle #2}	
\DeclarePairedDelimiterX\dotp[2]{\langle}{\rangle}{#1, #2}
\newcommand{\bigO}[1]{\mathcal{O}\left( #1 \right)}
\newcommand{\bigOt}[1]{\widetilde{\mathcal{O}}\left( #1 \right)}

\DeclareMathAlphabet{\dutchcal}{U}{dutchcal}{m}{n}
\SetMathAlphabet{\dutchcal}{bold}{U}{dutchcal}{b}{n}
\DeclareMathAlphabet{\dutchbcal} {U}{dutchcal}{b}{n}

\usepackage{ifdraft}
\ifdraft{
	\newcommand{\authnote}[3]{{\color{#3} {\bf  #1:} #2}}
}{
	\newcommand{\authnote}[3]{}
}

\usetikzlibrary{quantikz2}

\begin{document}

\title{Efficient Shadow Tomography of Thermal States}
	\author{
		Chi-Fang Chen\thanks{EECS, University of California, Berkeley, CA 94720, USA \&\\
		Center for Theoretical Physics, Massachusetts Institute of Technology, Cambridge, MA 02139, USA.
		{\tt achifchen@gmail.com}}
		\and 
		András Gilyén\thanks{HUN-REN Alfréd Rényi Institute of Mathematics, Budapest, Hungary. 
		{\tt gilyen@renyi.hu}}
	}
\maketitle

\begin{abstract}
   We present a general protocol for estimating $M$ observables from only $\mathcal{O}(\log (M)/\varepsilon^2)$ copies of a Gibbs state whose Hamiltonian is accessible. The protocol uses single-copy, nonadaptive measurements and uses a total Hamiltonian simulation time of $\widetilde{\mathcal{O}}(\beta M/\varepsilon^2)$; we show that the sample complexity is optimal in a black-box setting where exponential time Hamiltonian simulation is prohibited. The key idea is a new interpretation of quantum Gibbs samplers as \textit{detailed-balance measurement channels}: measurements that preserve the Gibbs state when outcomes are marginalized. Consequently, shadow tomography of thermal states admits a general efficient algorithm when the Hamiltonian is known, substantially lowering the readout cost in quantum thermal simulation. 
\end{abstract}

\setcounter{tocdepth}{2}
\tableofcontents

\section{Introduction}

Simulation of quantum many-body systems is widely expected to be a major application of future quantum computers. Consequently, preparing quantum thermal and ground states—central tasks in this vision—has motivated a broad array of sophisticated quantum algorithm proposals~\cite{temme2009QuantumMetropolis,shtanko2021PreparingQThermalStatesNoieslessy,chen2025EfficientQThermSim,ding2024GibbsSamplingViaKMS,jiang2024QMetropolisWeakMeas, ding2024single,rall2023thermal} as well as experimental demonstrations~\cite{mi2024stable}.

Nevertheless, many end-to-end application in chemistry and materials science ultimately requires estimating a large collection of observables. Because quantum measurements are inherently destructive, this task leads to a rich set of fundamental problems in quantum state tomography~\cite{haah2016sample,o2016efficient,knill2007optimal,huggins2022nearly, bonet2020nearly,zhao2021fermionic,wan2023matchgate,king2025triply,pelecanos2025mixed}. Despite important progress on some aspects of this problem~\cite{aaronson2018ShadowTomography,huang2020PredictingManyProps}, the best general-purpose efficient protocol for estimating many observables is simply to measure, discard, and repeat the already expensive state-preparation procedure. This \textit{multiplicative} overhead from tomography threatens to narrow the regime of quantum advantage in quantum simulation, especially considering that the competing classical approaches (e.g., tensor network \cite{schollwock2011density,orus2019tensor,verstraete2023density}) often provide an explicit description of the state where extracting observables has merely \textit{additive} costs. 

In this work, we introduce a general measurement protocol for estimating many thermal observables that is:
\begin{itemize}
\item \textbf{Sample efficient.} The number of thermal samples scales quadratically with the precision guarantee and logarithmically with the number of observables.
\item \textbf{Computationally efficient.} The protocol uses single-copy measurements. The total (controlled) Hamiltonian evolution time scales linearly with the inverse temperature and the number of observables, and quadratically with the precision.
\end{itemize}

Generally, shadow tomography has shown that many observables can be estimated information-theoretically from a small number of copies of an unknown quantum state. However, it relies on entangled operations across many copies and appears computationally inefficient. Classical shadow methods can provide computationally efficient protocols, but only for restricted classes of observables, such as low-rank or few-body operators~\cite{king2025triply}. By contrast, our results are both sample- and computationally efficient and apply to arbitrary observables, including Green’s functions and high-weight operators that commonly arise in quantum simulation.

The key idea behind our protocol is to repurpose recent quantum Gibbs sampler constructions as
\begin{align*}
\textit{measurement channels that preserve the Gibbs state when outcomes are ignored.}
\end{align*}
Operationally, the measurement protocol can be viewed as running a carefully designed Gibbs sampling algorithm, where the Kraus operators are tailored to the desired observables. Conceptually, the mechanism for estimating multiple observables resembles drawing statistical inferences from trajectories of classical Markov chains. Our results show that the destructive nature of quantum measurements can be effectively circumvented by exploiting quantum detailed balance, and the algorithmic use of quantum trajectories can be surprisingly powerful. Even though our work focuses entirely on tomography and keeps the thermal state preparation as a black box, we expect the interplay between the algorithmic and tomographic nature of quantum Gibbs samplers to be a fertile ground~\cite{jiang2026predicting}.

More broadly, our protocol suggests that many black-box tomography problems admit fundamentally different solutions when additional structure of the initial state—such as knowledge of the Hamiltonian—is available~\cite{chen2025catalytic}. In such cases, this structure can translate into orders-of-magnitude reductions in state-preparation costs in practical quantum simulation workflows.

\subsection{Prior work}

Due to the destructive nature of quantum measurements, the tomography of general unknown quantum states has become a rich and rapidly evolving field. A major breakthrough was the proposal of shadow tomography~\cite{aaronson2018ShadowTomography}, which showed that $M$-bounded observables can be estimated using only polylogarithmically many copies of an unknown quantum state. However, the original formulation focuses primarily on information-theoretic learnability. Implementing the protocol appears computationally challenging, as it relies on entangled measurements across many copies and classical post-processing steps such as matrix multiplicative weight updates of exponentially large matrices. It remains unclear whether these procedures admit efficient classical or quantum algorithms.

More recently, the widely celebrated framework of classical shadows~\cite{huang2020PredictingManyProps} demonstrated that the central promise of shadow tomography can be achieved efficiently for certain classes of observables, particularly low-rank or low-weight operators. An appealing feature of the protocol is that the observables need only be specified after the measurements are performed. However, the sample complexity depends on a quantity known as the shadow norm, which can grow prohibitively large for high-rank or high-weight observables. Today, there has been a rapidly expanding family of shadow-based tomography protocols that interpolate between different trade-offs in copy complexity and observable classes~\cite{chen2024optimal}. Nevertheless, it remains open whether one can achieve sample complexity scaling logarithmically in the number of observables for completely general observables.

A complementary line of work seeks polynomial improvements over the naive $M/\epsilon^2$ scaling using advanced quantum algorithmic techniques~\cite{huggins2022nearly,apeldoorn2022QTomographyWStatePrepUnis,sinha2025dimension}. While these approaches provide meaningful speedups in certain regimes, it remains unclear whether improvements beyond quadratic factors are achievable within this framework.

Our protocol offers a different conceptual route for reducing tomography costs: implicit knowledge about the initial state. In many quantum simulation settings, the Hamiltonian is known, even though preparing the corresponding Gibbs state may be computationally hard. We show that this structure can be exploited to effectively reuse Gibbs samples by designing measurement protocols that satisfy quantum detailed balance. Concretely, we construct a Gibbs sampler whose Kraus operators are associated with the desired observables, allowing measurements to be performed while preserving the Gibbs state when outcomes are ignored. The key distinction from the general shadow tomography problem is that the thermal setting is fundamentally computational: information-theoretically, the Gibbs state is already specified by the Hamiltonian. Remarkably, access to the Hamiltonian alone suffices to achieve shadow-tomography-like sample complexity while avoiding the destructive overhead of repeated state preparation. 

\subsection{Problem setup}
\label{sec:setup}
Given an $n$-qubit Hamiltonian $\vH$ and inverse temperature $\beta$, consider the Gibbs state
\begin{align*}
\vrho_{\beta} = \frac{e^{-\beta \vH}}{\tr[e^{-\beta \vH}]} .
\end{align*}

We are interested in bounded Hermitian observables
\begin{align*}
\vA_1,\ldots,\vA_M \quad \text{such that}\quad \vA_i= \vA_i^{\dagger},\quad \norm{\vA_i} \le 1 \quad \text{for each}\quad 1\le i \le M.
\end{align*}

Our goal is to estimate all thermal expectations
\begin{align*}
\tr[\vrho_{\beta} \vA_i] \in [-1,1]\quad \text{to precision}\quad \epsilon,
\end{align*}
{with failure probability} $\delta$ on a (fault-tolerant) quantum computer.

We make the following assumptions:
\begin{itemize}

\item \textbf{(Gibbs samples.)} We assume black-box access to \textit{independent} copies of the Gibbs state $\vrho_{\beta}$.

\item \textbf{(Hamiltonian simulation.)} Our protocol requires black-box access to controlled Hamiltonian time evolutions $e^{\pm i\vH t}$. 

\item \textbf{(Access to observables.)} We assume block-encoding access to each observable $\vA_i$. This assumption is automatically satisfied when $\vA_i$ is implemented by a self-adjoint unitary circuit, such as a Pauli string.

\end{itemize}
In order to state our results in a clean form we use the notation $\widetilde{\CO}(T)=\CO\!\left(T\poly\!\log(T + M + \beta + \nrm{\vH} + \frac{1}{\veps} + \frac{1}{\delta})\right)$.

\subsection{Main results}

\begin{table}[!ht]
\centering
\renewcommand{\arraystretch}{2.5}
\begin{tabular}{l c c c}
\hline
Method & Sample Complexity & Efficient & Observable Class \\
\hline
Classical case 
& $\displaystyle \Theta\!\bigg(\frac{\log(M/\delta)}{\epsilon^2}\bigg)$ 
& Yes 
& $|O_i|\le 1$ \\
\hline\hline
\autoref{thm:main} \& \ref{thm:tightSample}
& $\displaystyle \Theta\!\bigg(\frac{\log(M/\delta)}{\epsilon^2}\bigg)$ 
& Yes, $\displaystyle \bigOt{\frac{\beta M}{\epsilon^2}}$ 
& $\|\vA_i\|\le 1$ \\

Naive 
& $\displaystyle \bigO{\frac{M\log(M/\delta)}{\epsilon^2}}$ 
& Yes 
& $\|\vA_i\|\le 1$ \\

Shadow Tomography 
& $\displaystyle \bigOt{\frac{\log(1/\delta)\log^{4}(M)\,\log(D)}{\epsilon^{4}}}$ 
& Unknown 
& $\|\vA_i\|\le 1$ \\

Classical Shadow 
& $\displaystyle \bigO{\frac{\log(M/\delta)}{\epsilon^2}\,c_{\text{shadow}}}$ 
& sometimes 
& \makecell{Low-rank or \\ low-weight} \\
\hline
\end{tabular}
\caption{
Comparison of measurement strategies for estimating $M$ observables. The sample complexity of our \autoref{thm:main} is tight under the assumption that we can only use controlled Hamiltonian simulation for a subexponential amount of time, as we show in \autoref{thm:tightSample}.
The naive algorithm uses $\bigO{\frac{\log(M/\delta)}{\epsilon^2}}$ fresh Gibbs samples for each of the $M$ observables, the improved sample complexity of shadow tomography is proven in~\cite{apeldoorn2018ImprovedQSDPSolving} while that of classical shadows can be found in~\cite{huang2020PredictingManyProps}.}
\end{table}

\begin{thm}[Measuring many observables with very few samples of Gibbs states]
\label{thm:main}
    Consider $M$ observables $\vA_1, \cdots, \vA_M$ such that $\norm{\vA_i}\le 1$, a Hamiltonian $\vH$, and inverse temperature $\beta.$ Then, we can estimate all expectations $\tr[ \vrho\vA_{i}]$ for the Gibbs state to error $\epsilon$, with failure probability $\delta$ using
    \begin{align*}
    S &= \CO\L(\frac{\log(M/\delta)}{\epsilon^2}\R)\quad \text{samples of Gibbs state}\quad \vrho\\
    b &=  \bigOt{\frac{M}{\epsilon^2}} \quad \text{block encodings of}\quad \{\vA_i\}\\
    t &= \bigOt{\frac{\beta M}{\epsilon^2}} \quad \text{total Hamiltonian simulation time}.
    \end{align*}
    Here, $\bigOt{\cdot}$ absorbs poly-logarithmic factor of $M,1/\epsilon,\beta,\nrm{\vH}$, and $1/\delta$.
\end{thm}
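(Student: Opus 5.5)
We will build, for each observable, a \emph{detailed-balance measurement channel}: a quantum instrument $\{\CE_{A,\pm}\}$ whose outcome statistics on $\vrho$ reveal $\tr[\vrho\vA]$, and whose outcome-marginalized channel $\CE_{A,+}+\CE_{A,-}$ fixes $\vrho$ exactly (or to error $\epsilon'$ in trace distance). Each Gibbs sample can then be reused. We would pass it sequentially through the instruments for $\vA_1,\dots,\vA_M$ and record a $\pm1$ outcome at each step. Because the marginalized channels preserve $\vrho$, the state entering step $i$ is still $\vrho$ once we average over earlier outcomes. Hence the $i$-th outcome $X_i^{(s)}$ of sample $s$ has marginal mean exactly $\tr[\vrho\vA_i]$ up to bias $\bigO{i\epsilon'}$. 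Repeating over $S$ independent Gibbs samples gives $S$ i.i.d.\ copies of the vector $(X^{(s)}_1,\dots,X^{(s)}_M)$. A Hoeffding bound on each coordinate, combined with a union bound over $M$ coordinates, gives $S=\CO(\log(M/\delta)/\epsilon^2)$. Correlations between coordinates within one sample are harmless because only marginals are used. Choosing $\epsilon'=\epsilon/(2M)$ keeps the total bias below $\epsilon/2$, and this only enters logarithmically into the cost.

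The construction of the instrument is the heart of the proof. For a single observable, take the operator $\vK=(\vI+\vA)/2$ with $0\preceq\vK\preceq\vI$. A naive POVM $\{\vK,\vI-\vK\}$ gives the right statistics but disturbs $\vrho$. Instead, mimic the discriminant-proposal / KMS-detailed-balance Gibbs samplers of Chen–Kastoryano–Gilyén and Ding et al. Pass $\vA$ through a Gaussian-filtered operator Fourier transform $\hat{\vA}(\omega)$, and use Metropolis-like weights $\gamma(\omega)$. This produces Kraus operators $\vL_\omega$, together with a coherent correction term $\vK_0$, such that the Lindbladian-like map $\CL$ satisfies $\vrho^{-1/2}$-weighted detailed balance. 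Then $\CL^\dagger$ is self-adjoint in the KMS inner product, which implies $\CL(\vrho)=0$. Turn $e^{\CL}$, or a single discrete-time weak-measurement step $\CN=\mathrm{id}+\eta\CL$, into an instrument by labeling the jump branches. The jumps are built from $\vA$'s positive and negative parts, for example from $\sqrt{\vK}$ and $\sqrt{\vI-\vK}$ pieces, so that labels encode $\pm$.

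The key identity to verify is an unbiased estimator: the difference of outcome probabilities equals a known multiple $c\,\tr[\vrho\vA]$. The constant $c$ should be $\Omega(1)$, or else $\Theta(\eta)$ with the estimator rescaled, which would cost a $1/\eta^2$ factor. We expect this identity to follow from the fact that the $\omega$-integral of $\gamma(\omega)\hat{\vA}(\omega)^\dagger\hat{\vA}(\omega)$ weighted appropriately reproduces a filtered version of $\vA$ in expectation. Its expectation under $\vrho$ equals $\tr[\vrho\vA]$ because $\vrho$ commutes with $\vH$. For $\vA$ itself, the relevant frequencies cancel when paired as $\omega\leftrightarrow-\omega$ under $\vrho$.

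The main obstacle is getting the measurement strong, meaning an $\Omega(1)$ signal per sample, while preserving $\vrho$ exactly. Weak measurements would multiply the sample count. If the detailed-balance channel must be weak, we would fall back on the following: a pair of measurements related by time-reversal/KMS conjugation, $\vA\mapsto\vrho^{1/2}\vA\vrho^{-1/2}$ made bounded by Gaussian filtering at width $\sim1/\beta$, so the filter error is controlled with $\tilde{\CO}(\beta)$ Hamiltonian simulation time.

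Cost accounting would then be as follows. Each instrument uses $\bigO{1}$ block encodings of $\vA_i$ and $\tilde{\CO}(\beta)$ Hamiltonian simulation time, via the operator Fourier transform with energy resolution $1/\beta$ and a $\log(M/\epsilon)$ overhead for the target accuracy $\epsilon'$. The instrument is applied $M$ times per sample on $S$ samples. This gives $b=\tilde{\CO}(M/\epsilon^2)$ and $t=\tilde{\CO}(\beta M/\epsilon^2)$, and all measurements are single-copy and nonadaptive.
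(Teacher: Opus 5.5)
Your outer architecture matches the paper's: each copy passes sequentially through $\vrho$-preserving instruments for $\vA_1,\dots,\vA_M$, each column of outcomes has the law of a direct measurement on $\vrho$, and a union bound over columns makes the intra-copy correlations harmless. The gaps are in the two places where the real work lies.

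\textbf{First gap: the unbiased-estimator identity.} The mechanism you propose cannot produce it. The operator $\int\gamma(\omega)\hat{\vA}(\omega)^\dagger\hat{\vA}(\omega)\,\rd\omega$ is quadratic in $\vA$. Its Gibbs expectation is a weighted two-point function $\sum_\nu w(\nu)\tr[\vrho\vA_\nu^\dagger\vA_\nu]$, not $\tr[\vrho\vA]$. Likewise, for your $\vK=(\vI+\vA)/2$, filtering $\sqrt{\vK}$ and $\sqrt{\vI-\vK}$ with weights obeying $w(\nu)=e^{-\beta\nu}w(-\nu)$ breaks the identity $\sum_\nu\tr[\vrho(\sqrt{\vK})_\nu^\dagger(\sqrt{\vK})_\nu]=\tr[\vrho\vK]$ that made the unfiltered POVM informative.

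The missing idea is the polarization in \autoref{lem:discMeasChannel}. Filter $\vA$ once, linearly, as $\hat{\vA}_f=\sum_\nu g(\nu)\vA_\nu$ with real $g(\nu)=g(-\nu)e^{-\beta\nu/2}$. Then use Kraus operators proportional to $\vA_\pm=\frac12(\vI\pm\hat{\vA}_f)$.
\begin{itemize}
\item Since $\vI$ commutes with $\vH$, both satisfy $e^{\beta\vH/2}\vA_\pm e^{-\beta\vH/2}=\vA_\pm^\dagger$, so the transition part is detailed balanced.
\item The quadratic terms cancel in $\vA_+^\dagger\vA_+-\vA_-^\dagger\vA_-=\frac12(\hat{\vA}_f+\hat{\vA}_f^\dagger)$. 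Its Gibbs expectation keeps only the $\nu=0$ component and equals $g(0)\tr[\vrho\vA]$.
\end{itemize}
Trace preservation then needs the rejection operator $\vK_0=\sqrt{\sqrt{\vrho}(\vI-\CT^\dagger[\vI])\sqrt{\vrho}}\,\vrho^{-1/2}$. Its efficient implementation, via the cited Glauber/Metropolis construction, carries a polylogarithmic subnormalization. So the signal strength that can actually be implemented is $c=\Theta(1/\log^2(\beta\nrm{\vH}))$, and you give no efficient construction with $c=\Omega(1)$.

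\textbf{Second gap: the weak measurement.} You correctly flag weak measurement as the main obstacle but do not resolve it; the KMS-conjugation fallback does not change the measurement strength. Suppose you apply the instrument once per observable per copy and rescale outcomes to $\{0,\pm2/c\}$. Hoeffding then gives $S=\bigO{\log(M/\delta)/(c^2\veps^2)}$, and Bernstein at best $\bigO{\log(M/\delta)/(c\veps^2)}$. Both are polylogarithmically more than the claimed $\bigO{\log(M/\delta)/\veps^2}$.

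The paper's key observation is that weakness costs time, not samples.
\begin{itemize}
\item Because the channel preserves $\vrho$ on average, it can be reapplied $\ell=\Theta(\frac1c\log\frac{1}{\veps\delta})$ times to the same copy, and each rescaled outcome $a_j$ still has mean $\tr[\vrho\vA]$.
\item Since $\nrm{\vK_1^\dagger\vK_1},\nrm{\vK_2^\dagger\vK_2}\le c$, each sign occurs with conditional probability at most $c$ given any past outcomes.
\item A coupling with $\mathrm{Bin}(\ell,c)$ (\autoref{lem:measuremetnTailBound}) then shows that the average $s$ of the $a_j$ satisfies $\BE[s^2]\le 8+8/(c\ell)$ and $|s|\le4$ with high probability, despite the strong correlations among the $a_j$.
\item Bernstein or median-of-means over the $r$ independent copies of $s$ yields the stated $S$.
\end{itemize}
The repetitions cost only an extra $\bigOt{1/c}=\bigOt{1}$ factor in Hamiltonian simulation time and block-encoding uses.
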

In fact, the observables can be given in an \textit{online} fashion, and with nonadaptive measurements. We only displayed the Hamiltonian simulation time as the dominating cost. 

Alternatively, if all observables are known ahead of time, we can also just start with a single Gibbs state, and after processing the first copy (according to the above protocol), we can use the Gibbs state as a ``warm-start'' for preparing the next Gibbs sample. This way a single input Gibbs sample suffices, but we need to wait for the ``autocorellation'' time of a Gibbs sampler between processing subsequent samples in a similar fashion to the protocol of~\cite{jiang2026predicting}. 

\subsection{Proof ideas}

Recently, a new wave of quantum algorithms has emerged for preparing thermal states using detailed-balance Lindbladians or quantum channels~\cite{temme2009QuantumMetropolis,shtanko2021PreparingQThermalStatesNoieslessy,chen2025EfficientQThermSim,ding2024GibbsSamplingViaKMS,jiang2024QMetropolisWeakMeas, ding2024single,rall2023thermal}. Operationally, these algorithms can be viewed as implementing carefully controlled system–ancilla interactions. When the ancilla is traced out, the resulting effective dynamics satisfy an exact structural property—quantum detailed balance~\cite{chen2023ExactQGibbsSampler}—that guarantees the Gibbs state as a fixed point.

The key idea of our protocol is a conceptual extension of these quantum Gibbs-sampling algorithms. Rather than \textit{discarding} the ancilla, we observe that the ancilla—recording which transition or “jump’’ has occurred—already functions as a metrological device. This perspective leads us to introduce the central primitive of our algorithm: detailed-balance measurement channels.

\begin{defn}[Detailed-balanced measurement channel]
Given a Hermitian observable $\vA$ with $\norm{\vA}\le 1$ and a full-rank quantum state $\vrho$, we say a collection of Kraus operators ${\vK_i}$ with $\sum_i \vK_i^{\dagger}\vK_i=\vI$ forms a $\vrho$-detailed-balance measurement channel for $\vA$ if:
\begin{itemize}
\item (detailed-balanced on average) When the measurement outcomes are marginalized, the resulting channel $\CN = \sum_i \vK_i[\cdot]\vK_i^{\dagger}$ satisfies detailed balance with respect to $\vrho$.
\item (informative outcomes) A known linear combination of the outcome probabilities $\braket{\vK_i^{\dagger}\vK_i}_{\vrho}$ provides an unbiased estimator of $\braket{\vA}_{\vrho}$.
\end{itemize}
\end{defn}

It will be useful to distinguish between the averaged channel and the full measurement apparatus (including the appended ancilla):
\begin{align*}
\CE[\vsigma] := \sum_i \vK_i \vsigma \vK_i^{\dagger} \otimes \ketbra{i}{i},
\qquad
\text{such that}\qquad
\tr_{anc}[\CE] = \CN .
\end{align*}

Using the above detailed-balanced measurement channel as a black-box primitive, we can sketch the main estimation algorithm (\autoref{fig:parallel-measurements}). Consider performing the following sequential measurement procedure $r$ times on $r$ independent copies of $\vrho$:
\begin{align*}
&\CE^{(1)}_{M}\cdots\CE^{(1)}_{1}[\vrho]
\quad \text{such that}\quad
\CN^{(1)}_{M}\cdots\CN^{(1)}_{1}[\vrho]=\vrho,\\
&\cdots\\
&\CE^{(r)}_{M}\cdots\CE^{(r)}_{1}[\vrho]
\quad \text{such that}\quad
\CN^{(r)}_{M}\cdots\CN^{(r)}_{1}[\vrho]=\vrho .
\end{align*}

For each run, we record the outcome of every $\CE_i^{(j)}$.
To estimate the observable $\vA_m$, we simply marginalize over all outcomes except those in the $m$-th column. Due to the detailed-balance property, the distribution of these outcomes is identical to the distribution obtained by directly measuring on the Gibbs state:
\begin{align*}
&\CN^{(1)}_{M}\cdots \CE^{(1)}_{m}\cdots \CN^{(1)}_{1}[\vrho]
= \CE^{(1)}_{m}[\vrho],\\
&\cdots\\
&\CN^{(r)}_{M}\cdots \CE^{(r)}_{m}\cdots \CN^{(r)}_{1}[\vrho]
= \CE^{(r)}_{m}[\vrho].
\end{align*}

Consequently, the $1/\epsilon^2$ scaling arises from standard mean estimation, while the $\log(M/\delta)$ dependence follows from a union bound over the $M$ observables. 

Intriguingly, the sequential measurement outcomes on the \textit{same} sample can be highly correlated. Indeed, conditioned on observing any previous outcome, the post-selected quantum state may become deterministic and thus can be very far from Gibbs.\footnote{Essentially this is what happens for the family of Hamiltonians and observables that we use in our sample complexity lower bound, the proof of \autoref{thm:tightSample}.} Nevertheless, for the purposes of estimating individual observables, the post-selected state can be treated effectively as the Gibbs state "on average", even though we have already recorded the measurement outcomes. Furthermore, if later on we wish to measure another observable, we can keep using the same set of $\vrho$, regardless of previous outcomes. 

\begin{figure}[t]
\centering
\begin{tikzpicture}[x=1.2cm,y=0.8cm]
\newcommand{\cout}{0.35}

% Row 1
\node at (0,0) {$\vrho$};
\draw[->] (0.3,0) -- (1,0);

\node[draw,minimum width=1.0cm,minimum height=0.6cm] (E11) at (1.8,0) {$\CE^{(1)}_1$};
\node at (3.0,0) {$\cdots$};
\node[draw,minimum width=1.0cm,minimum height=0.6cm] (E1i) at (4.2,0) {$\CE^{(1)}_i$};
\node at (5.4,0) {$\cdots$};
\node[draw,minimum width=1.0cm,minimum height=0.6cm] (E1M) at (6.6,0) {$\CE^{(1)}_M$};

\draw[->] (2.3,0) -- (2.7,0);
\draw[->] (3.3,0) -- (3.7,0);
\draw[->] (4.7,0) -- (5.1,0);
\draw[->] ($(E1M.west)+(-0.4,0)$) -- (E1M.west);

\draw[->] (E11.south) -- +(\cout,-\cout) node[right] {$a^{(1)}_1$};
\draw[->] (E1i.south) -- +(\cout,-\cout) node[right] {$a^{(1)}_i$};
\draw[->] (E1M.south) -- +(\cout,-\cout) node[right] {$a^{(1)}_M$};

% dots row
\node at (0,-1.8) {$\vdots$};
\node at (1.8,-1.8) {$\vdots$};
\node at (4.2,-1.8) {$\vdots$};
\node at (6.6,-1.8) {$\vdots$};

% Last row
\node at (0,-3.6) {$\vrho$};
\draw[->] (0.3,-3.6) -- (1,-3.6);

\node[draw,minimum width=1.0cm,minimum height=0.6cm] (Er1) at (1.8,-3.6) {$\CE^{(r)}_1$};
\node at (3.0,-3.6) {$\cdots$};
\node[draw,minimum width=1.0cm,minimum height=0.6cm] (Eri) at (4.2,-3.6) {$\CE^{(r)}_i$};
\node at (5.4,-3.6) {$\cdots$};
\node[draw,minimum width=1.0cm,minimum height=0.6cm] (ErM) at (6.6,-3.6) {$\CE^{(r)}_M$};

\draw[->] (2.3,-3.6) -- (2.7,-3.6);
\draw[->] (3.3,-3.6) -- (3.7,-3.6);
\draw[->] (4.7,-3.6) -- (5.1,-3.6);
\draw[->] ($(ErM.west)+(-0.4,0)$) -- (ErM.west);

% outputs
\draw[->] (Er1.south) -- +(\cout,-\cout) node[right] (a1) {$a^{(r)}_1$};
\draw[->] (Eri.south) -- +(\cout,-\cout) node[right] (ai) {$a^{(r)}_i$};
\draw[->] (ErM.south) -- +(\cout,-\cout) node[right] (aM) {$a^{(r)}_M$};

% expectation arrows
\draw[-stealth] ([yshift=-5pt]a1.south) -- +(0,-0.55)
node[below=3pt] {$\tr[\vrho\vA_1]$};

\draw[-stealth] ([yshift=-5pt]ai.south) -- +(0,-0.55)
node[below=3pt] {$\tr[\vrho\vA_i]$};

\draw[-stealth] ([yshift=-5pt]aM.south) -- +(0,-0.55)
node[below=3pt] {$\tr[\vrho\vA_M]$};

% brace
\draw[decorate,decoration={brace,amplitude=8pt}]
(-0.8,-3.9) -- (-0.8,0.3)
node[midway,left=10pt]
{$\begin{array}{c}
O\!\left(\frac{\log(M/\delta)}{\epsilon^2}\right)\\
\text{independent copies}
\end{array}$};

% correlated arrow
\draw[<->] ($(E11.north)+(0,0.7)$) -- ($(E1M.north)+(0,0.7)$)
node[midway,above] {correlated};

\end{tikzpicture}
\caption{Parallel measurements applied to multiple copies of $\vrho$ to estimate $M$ expectation values $\tr[\vrho \vA_1],\cdots,\tr[\vrho \vA_M]$, where the outcome is denoted by $a$. For each column of $(a_i^{(1)},\cdots,a_i^{(r)})$, we estimate $\tr[\vrho \vA_i]$ to precision $\epsilon$, with failure probability $\delta/M$. By the union bound, ensuring that all estimators are correct incurs only a logarithmic scaling in the number of copies.}
\label{fig:parallel-measurements}
\end{figure}
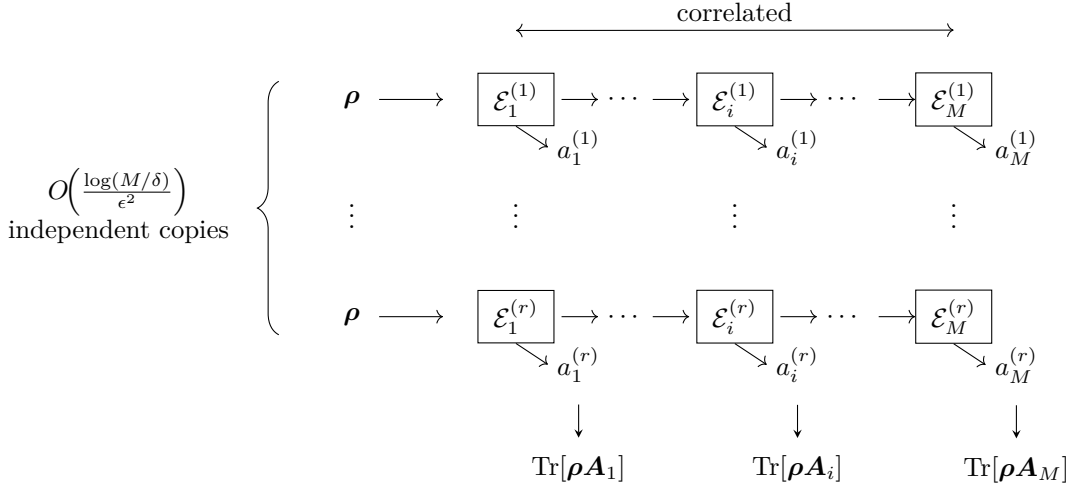

\section{Techniques}

The main algorithmic primitive is to instantiate the detailed-balance measurement channel via a quantum Gibbs sampler. Conceptually, it may be cleaner to think in discrete time, but we also introduce the continuous-time version for completeness. Once such primitives are instantiated using off-the-shelve Gibbs sampling framework (with some specification), the proof of the main results (\autoref{thm:main}) follows naturally.

\subsection{Detailed-balance measurement channels}

A natural instantiation of a detailed-balance measurement channel is a detailed balanced Gibbs sampler with carefully designed Kraus operators.

\begin{lem}[Efficient detailed-balance measurement channels from Gibbs sampling]\label{lem:discMeasChannel}
    For any self-adjoint observable $\vA,$ and Hamiltonian $\vH,\beta$, there is a set of Detailed-balanced measurement operators $\{\vK_i\}$ and a $c=\Theta\left(\frac{1}{\log^2(\beta\nrm{\vH})}\right)$ such that $\nrm{\vK_1^\dagg\vK_1}$, $\nrm{\vK_2^\dagg\vK_2}\le c$ and
    \begin{align*}
    \tr[\vK^{\dagger}_1\vK_1\vrho]-\tr[\vK^{\dagger}_2\vK_2\vrho] = \frac{c}{2}\tr[ \vA\vrho].
    \end{align*}
    Furthermore, this channel can be implemented to diamond norm error $\veta$ using $\tCO(\beta)$ Hamiltonian simulation time and $\tCO(1)$ uses of an $a$-qubit block-encoding for the observable $\vA$ with the help of $\CO(a+1)$ many ancillas.
\end{lem}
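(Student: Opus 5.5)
The plan is to build the measurement channel from an existing exact detailed-balance Gibbs sampler, and to choose its jump operators so that the \emph{linear} part of the acceptance probabilities carries $\tr[\vA\vrho]$ while the \emph{quadratic} parts cancel.

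\textbf{Choice of jumps.} I would use the two jump operators
\begin{align*}
\vA_\pm := \tfrac{1}{2}(\vI \pm \vA), \qquad \nrm{\vA_\pm} \le 1,
\end{align*}
and feed each into the standard discrete-time detailed-balance construction (operator Fourier transform with a Gaussian energy filter and a Metropolis/Glauber-type weight $\gamma(\omega)$). Write $\hat{\vA}_\pm(\omega)$ for the filtered jumps. The transition part for jump $\pm$ is then a Kraus family of the form $\sqrt{c}\,\sqrt{\gamma(\omega)}\,\hat{\vA}_\pm(\omega)$, tagged with the outcome $\pm$. I set
\begin{align*}
\vK_1^\dagger\vK_1 := c\int \gamma(\omega)\,\hat{\vA}_+(\omega)^\dagger\hat{\vA}_+(\omega)\,\rd\omega ,
\end{align*}
and $\vK_2^\dagger\vK_2$ analogously with $\vA_-$. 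In practice each $\vK_i$ is a coarse-graining over an ancilla register; the remaining ancilla register is simply marginalized.

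The constant $c = \Theta(1/\log^2(\beta\nrm{\vH}))$ is the normalization needed so that
\begin{align*}
\vK_1^\dagger\vK_1 + \vK_2^\dagger\vK_2 \le \vI ,
\end{align*}
with room left for a rejection operator. It comes from the $\ell_1$/operator-norm bound on the filtered, truncated Fourier transform, where the energy window must cover $\CO(\nrm{\vH})$ with resolution $1/\beta$. Nonnegativity of $\gamma$ and $\nrm{\vA_\pm}\le 1$ give $\nrm{\vK_i^\dagger\vK_i}\le c$.

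\textbf{Identity computation.} Expanding $\hat{\vA}_\pm = \tfrac{1}{2}(\hat{\vI}\pm\hat{\vA})$, the difference of acceptance probabilities is
\begin{align*}
\tr[\vK_1^\dagger\vK_1\vrho]-\tr[\vK_2^\dagger\vK_2\vrho]
= \frac{c}{2}\int\gamma(\omega)\,\tr\!\left[\vrho\left(\hat{\vI}(\omega)^\dagger\hat{\vA}(\omega)+\hat{\vA}(\omega)^\dagger\hat{\vI}(\omega)\right)\right]\rd\omega .
\end{align*}
The quadratic terms in $\hat{\vA}$ and in $\hat{\vI}$ cancel exactly between the two outcomes.

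Since $\vI$ commutes with $\vH$, $\hat{\vI}(\omega)=\hat f(\omega)\vI$ is a scalar filter centred at $\omega=0$. Since $\vrho$ commutes with $e^{i\vH t}$, we have $\tr[\vrho\, e^{i\vH t}\vA e^{-i\vH t}] = \tr[\vrho\vA]$. Hence each cross term reduces \emph{exactly} to a scalar times $\tr[\vrho\vA]$, the scalar being a Gaussian-filter integral weighted by $\gamma$. I would tune the filter and weight (for instance via the symmetric choice $\gamma(0)$-weighting) so that this scalar, after absorbing into $c$, equals exactly $c/2$. A slight rescaling of $c$ by a constant is harmless.

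\textbf{Detailed balance and implementation.} The remaining outcome $\vK_0$ (rejection, possibly together with extra bookkeeping outcomes) must be chosen so that the averaged channel $\CN$ is exactly $\vrho$-detailed balanced. For this I would invoke the exact discrete-time detailed-balance Gibbs sampler construction, i.e.\ the weak-measurement Metropolis/KMS-type channels cited earlier. In that construction the accepted transitions are detailed balanced by the KMS symmetry of $\gamma$ and the Gaussian filter, and the rejection Kraus operator is designed, via the standard square-root/``$\vrho$-self-adjoint completion'' trick, so that the full channel satisfies detailed balance. Because that construction treats each jump operator independently, tagging the jump index $\pm$ in an ancilla and measuring it does not affect $\CN$.

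Implementation then follows the known block-encoding circuits. These use an operator Fourier transform with $\tCO(\beta)$ controlled Hamiltonian evolution and $\tCO(1)$ uses of the block-encoding of $\vA$, since one call suffices for $\vA_\pm$ via an LCU with one extra qubit. This gives $\CO(a+1)$ ancillas plus the frequency register, with diamond error $\veta$ from the truncations.

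\textbf{Main obstacle.} I expect the main difficulty to be the rejection part. One must check that an exactly detailed-balanced completion exists and is efficiently implementable while the acceptance outcomes stay separately readable, and that the constant relating the cross-term integral to $c/2$ comes out exact rather than approximate. If the cited sampler only gives approximate detailed balance, I would first try the exact KMS construction with a Gaussian $\gamma$, for which the filtered cross term is computable in closed form.
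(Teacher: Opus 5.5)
Your proposal is correct in outline and follows the paper's skeleton: polarize into $\tfrac12(\vI\pm\cdot)$, let the quadratic terms cancel, use that only the Bohr-frequency-zero component survives against $\vrho$, and complete with the rejection operator $\vK_0=\sqrt{\sqrt{\vrho}(\vI-\CT^\dagger[\vI])\sqrt{\vrho}}\,\vrho^{-1/2}$. Where you differ is in how the filtered jumps are built.

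The paper uses a single coherent filter $\hat{\vA}_f=\sum_\nu g(\nu)\vA_\nu$, applied to $\vA$ only, with a real shifted Gaussian satisfying $g(\nu)=g(-\nu)e^{-\beta\nu/2}$. It sets $\vA_\pm=\tfrac12\vI\pm\tfrac12\hat{\vA}_f$, and this choice buys several things:
\begin{itemize}
\item detailed balance of each transition term is the one-line identity $e^{\beta\vH/2}\vA_\pm e^{-\beta\vH/2}=\vA_\pm^\dagger$;
\item each outcome is a genuine single Kraus operator, as in the definition;
\item the signal constant is simply $\mathrm{Re}\,g(0)$;
\item the filter is picked to plug directly into the cited implementation result for $\vK_0$.
\end{itemize}
Your frequency-resolved version takes the operator Fourier transform of $\tfrac12(\vI\pm\vA)$ with a Metropolis weight $\gamma(\omega)$. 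It gets detailed balance of the transition part from the filter--weight compatibility condition of the exact continuous-frequency sampler. Each outcome then becomes a Kraus family coarse-grained over the frequency register. This is harmless downstream, since only the averaged channel and outcome probabilities are used. However, it leaves you to check that the efficient rejection-operator implementation you invoke actually covers this frequency-integrated family, a check the paper's filter choice avoids.

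There are two small points to fix.

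First, your signal constant is $\kappa=\int\gamma(\omega)|\hat f(\omega)|^2\rd\omega$. It is exact for the ideal channel, so your worry about approximation is unfounded. To get both $\nrm{\vK_i^\dagger\vK_i}\le c$ and the exact factor $\tfrac c2$ with the same $c$, you need $\gamma\le 1$ (not just $\gamma\ge 0$) and also $\kappa\ge\tfrac12$. Take a Gaussian $|\hat f|^2$ of variance $\sigma_E^2$ and the shifted Metropolis weight $\gamma(\omega)=\min\{1,e^{-\beta(\omega+\beta\sigma_E^2/2)}\}$. Then $\kappa=2\Phi(-\beta\sigma_E/2)$, with $\Phi$ the standard normal CDF, so you should choose $\beta\sigma_E\lesssim 1$.

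Second, in your construction the transition part already satisfies $\CT^\dagger[\vI]\le\vI$ with an $\CO(1)$ prefactor. So the $\Theta(1/\log^2(\beta\nrm{\vH}))$ subnormalization is not forced by the jumps, nor by a Fourier $\ell_1$ bound on them. In the paper it is tied only to the cited efficient implementation of $\vK_0$, and that is where you should justify it.
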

For measuring one specific observable, this bears no benefit than directly measuring the $\vA$. However, having the averaged channel detailed-balanced will become crucial for estimating many observables.

\begin{proof}
For any self-adjoint observable $\vA,$ consider
\begin{align*}
    \hat{\vA}_{f} &:= \int_{-\infty}^{\infty} f(t) e^{\ri \vH t}\vA e^{-\ri \vH t} \rd t\\&
     = \sum_{\nu} \vA_{\nu} g(\nu) \tag*{(setting $g(\nu):=\hat{f}(-\nu)=\int_{-\infty}^{\infty} f(t) e^{\ri \nu t} \rd t$)}
\end{align*}
(where $\vA= \sum_{\nu} \vA_{\nu}$ is a Bohr-frequency decomposition~\cite{chen2023ExactQGibbsSampler}) and construct a pair of "polarizations"
\begin{align*}
    \vA_+ &:= \frac{1}{2}\vI + \frac{1}{2}\hat{\vA}_f \\
    \vA_- &:= \frac{1}{2}\vI - \frac{1}{2}\hat{\vA}_f. 
\end{align*}

We choose a normalized $f$ function
\begin{align*}
    \int_{-\infty}^{\infty} \labs{f(t)} \rd t = 1,
\end{align*}
such that the function $g(\nu)=\hat{f}(-\nu)$ has the property
\begin{align*}
    g(\nu) &= g(-\nu) e^{-\beta \nu/2}
\end{align*}
so that crucially
\begin{align*}
e^{\beta \vH/2}\vA_+ e^{-\beta \vH/2} &= \vA_+^{\dagger}\\
e^{\beta \vH/2}\vA_- e^{-\beta \vH/2} &= \vA_-^{\dagger},
\end{align*}
which ensures that the transition part of the final channel is detailed-balanced~\cite{ding2024GibbsSamplingViaKMS,gilyen2024QGeneralGlauberMetropolis}.

A good choice is a shifted Gaussian
\begin{align*}
    g(\nu)= \hat{f}(-\nu) := \exp\left(- \frac{(\beta\sigma \nu + 1/\sigma)^2}{8}\right)
\end{align*}
so that up to phase $f$ is a Gaussian with standard deviation $\frac{\beta\sigma}{4\pi}$:
\begin{align*}
   f(t)=\frac{\sqrt{8 \pi }}{\beta\sigma} \exp\left(-\frac12\left(\frac{4 \pi t}{\beta\sigma}\right)^{\!\!2}+2 \frac{\pi \ri t }{\beta\sigma^2}\right).
\end{align*}

\begin{align*}
    \vA^{\dagger}_+\vA_+ + \vA^{\dagger}_-\vA_- &= \frac{1}{4}(\vI + \hat{\vA}_f)^{\dagger} (\vI + \hat{\vA}_f) + \frac{1}{4}(\vI - \hat{\vA}_f)^{\dagger} (\vI - \hat{\vA}_f)\\
    &= \frac{1}{2}(\vI + \hat{\vA}_f^{\dagger}\hat{\vA}_f).
\end{align*}

By the triangle inequality we have $\norm{\hat{\vA}_{f}} \le \norm{\vA} \cdot \int_{-\infty}^{\infty} \labs{f(t)} \rd t \le \norm{\vA},$ implying that
 \begin{align*}
     \norm{\vA_+},\norm{\vA_-} \le 1.
 \end{align*}

Therefore, we can construct an associated Lindbladian $\CL_{\vA}$ to estimate
\begin{align*}
    \vA_+^{\dagger}\vA_+ =  \frac{1}{4}(\vI + \hat{\vA}_f)^{\dagger} (\vI + \hat{\vA}_f)\\
    \vA_-^{\dagger}\vA_- =  \frac{1}{4}(\vI - \hat{\vA}_f)^{\dagger} (\vI - \hat{\vA}_f).
\end{align*}
By subtracting, we obtain
\begin{align*}
    \vA_+^{\dagger}\vA_+-\vA_-^{\dagger}\vA_- =  \frac{1}{2}(\hat{\vA}_f)^{\dagger} + \frac{1}{2} \hat{\vA}_f.
\end{align*}

Therefore, for any state that is diagonal in the energy basis, in particular the Gibbs state, we have that
\begin{align*}
      \tr\L[\vrho(\vA_+^{\dagger}\vA_+-\vA_-^{\dagger}\vA_-)\R] &= \frac{1}{2}\tr[\vrho\hat{\vA}_{f}+\vrho(\hat{\vA}_f)^{\dagger}]\\&
      = \frac{1}{2}\tr[\vrho\sum_{\nu} \vA_{\nu} g(\nu)+\vrho\big(\sum_{\nu} \vA_{\nu} g(\nu)\big)^{\dagger}]\\&
      = \frac{1}{2}\tr[\vrho\vA_{0} g(0)+\vrho\vA_{0}^\dagger g^*\!(0)]\\&
      = \mathrm{Re}(g(0))\tr[\vrho\vA]\\&
      = \exp\left(- \frac{\sigma^2}{8}\right)\tr[\vrho\vA].
\end{align*}

Lastly, for a normalizing constant $c=\Theta\left(\frac{1}{\log^2(\beta\nrm{\vH})}\right)$ we set
\begin{align*}
    \CT[\cdot]:= \frac{c}{2} \exp\left(\frac{\sigma^2}{8}\right) (\vA_+ [\cdot] \vA_+^{\dagger} +\vA_- [\cdot] \vA_-^{\dagger}) =: \vK_1[\cdot]\vK_1^{\dagger}+ \vK_2[\cdot]\vK_2^{\dagger}
\end{align*}
such that if $\sigma\leq 2$ we are guaranteed to have
\begin{align*}
    \nrm{\vK_1^\dagg\vK_1}=\frac{c}{2} \exp\left(\frac{\sigma^2}{8}\right) \nrm{\vA_+^\dagg\vA_+} \le c\quad\text{and\quad}
    \nrm{\vK_2^\dagg\vK_2}=\frac{c}{2} \exp\left(\frac{\sigma^2}{8}\right) \nrm{\vA_-^\dagg\vA_-} \le c.
\end{align*}

Since $\CT$ is detailed balanced, we can make it trace-preserving by adding a suitable rejection term $\vK_0=\sqrt{\sqrt{\vrho}(\vI-\CT^\dagger[\vI])\sqrt{\vrho}}\vrho^{-\frac12}$ as proven in~\cite{gilyen2024QGeneralGlauberMetropolis}. If we set $\sigma=\Theta\big(\log^{\frac{3}{2}}(1/\veta) \big)$ the additional Kraus operator $\vK_0/2$ can be implemented up to error $\bigO{\veta}$ by a quantum circuit using $\tCO(\beta)$ Hamiltonian simulation time and $\tCO(1)$ uses of an $a$-qubit block-encoding for the observable $\vA$ with the help of $\CO(a+1)$ many ancillas~\cite[Corollary 3, Lemma 12]{gilyen2024QGeneralGlauberMetropolis}. Using robust oblivious amplitude amplification~\cite{cleve2016EffLindbladianSim,gilyen2018QSingValTransf} this enables implementing the entire measurement channel defined by $\{\vK_0,\vK_1,\vK_2\}$ up to error $\veta$ (in diamond norm).
\end{proof}

In order to estimate $\tr[ \vA\vrho]$, one could just consider the classical random variable $a$ associated with the outcome from the ancilla such that
    \begin{align}\label{eq:simpleSample}
        a =\begin{cases}
            \frac{2}{c} \quad &\text{if outcome}\quad i=1\\
            -\frac{2}{c} \quad &\text{if outcome}\quad i=2\\
            0 \quad &\text{else}
            \end{cases}. 
    \end{align}
    Then, $a$ is an unbiased estimator  
    \begin{align*}
    \BE[a] = \frac{2}{c}\tr[\vK^{\dagger}_1\vK_1\vrho]-\frac{2}{c}\tr[\vK^{\dagger}_2\vK_2\vrho] = \tr[\vrho \vA] .
    \end{align*}
However, directly estimating $a$ incurs an extra multiplicative loss of $c$ for the number of samples. 

\begin{rmk}
    The subnormalizing constant $c$ is unwanted, and intuitively speaking, means that we effectively only perform a meaningful measurement with probability $c$ when the measurement channel is applied. Indeed, $\nrm{\vK_1^\dagg\vK_1},\nrm{\vK_2^\dagg\vK_2}\leq c$ means that the probability of seeing measurement label 1  is $\leq c$ for any quantum state (the same holds for 2 as well). However, we can avoid overheads in sample complexity by ``boosting'' the signal via repeating the same detailed-balanced measurement channel $\sim 1/c$ times. This effectively removes the adverse effect of the subnormalization by $c$, see \autoref{lem:measuremetnTailBound} in \autoref{apx:repeatTailBond} for a formal statement.
\end{rmk}

\subsection{The proof of our main result: \autoref{thm:main}}

\begin{lem}[Estimating one observable through a detailed-balanced measurement channel]\label{lem:oneObs}
    Consider observables $\vA$ such that $\norm{\vA}\le 1$ and a Hamiltonian $\vH$.
    Then, we can estimate $\tr[\vrho\vA]$ (without bias) within error $\epsilon$ with success probability at least $1-\delta$ using
    \begin{align*}
    S &= \bigO{\frac{\log(1/\delta)}{\epsilon^2}}\quad \text{samples of Gibbs state} \quad \vrho,
    \end{align*}
    by applying a (perfect, error-free) measurement channel as in \autoref{lem:discMeasChannel} repeated $\bigOt{\frac{1}{c}}$ times per sample.
    Furthermore, when marginalizing all measurement outcomes, each copy of the Gibbs state remains the same.  
\end{lem}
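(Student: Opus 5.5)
The plan is to apply the channel $\CN$ of \autoref{lem:discMeasChannel} (Kraus operators $\{\vK_0,\vK_1,\vK_2\}$) $L=\lceil C/c\rceil$ times in a row to each of the $S$ Gibbs samples. For the $\ell$-th application on the $j$-th sample, record the random variable $a_\ell^{(j)}$ defined in \eqref{eq:simpleSample}. The per-sample statistic is
\begin{align*}
X^{(j)}:=\frac{1}{L}\sum_{\ell=1}^{L} a_\ell^{(j)},
\end{align*}
and the final estimate is the empirical mean of $X^{(1)},\dots,X^{(S)}$. The claim that each copy stays the same after marginalizing follows at once from detailed balance. Since $\CN$ is detailed balanced with respect to $\vrho$ and trace preserving, $\CN[\vrho]=\vrho$, and hence $\CN^{L}[\vrho]=\vrho$.

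\textbf{Unbiasedness.} When all outcomes other than the $\ell$-th are marginalized, the state entering the $\ell$-th channel is $\CN^{\ell-1}[\vrho]=\vrho$. By \autoref{lem:discMeasChannel} this gives $\BE[a_\ell]=\tr[\vrho\vA]$. Linearity of expectation then yields $\BE[X^{(j)}]=\tr[\vrho\vA]$, even though the $a_\ell$ are correlated. Different copies are independent, so the $X^{(j)}$ are i.i.d.

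\textbf{Concentration (main obstacle).} Each $|a_\ell|$ equals $2/c$, so the trivial bound is only $|X^{(j)}|\le 2/c$, which is useless. We need $X^{(j)}$ to be $\CO(1)$-subgaussian, or at least to have $\CO(1)$ variance together with controlled tails. The idea is that the count $N_\ell:=\#\{\ell: \text{outcome}\ne 0\}$ is stochastically dominated by $\mathrm{Bin}(L,2c)$. This holds because $\nrm{\vK_1^\dagger\vK_1+\vK_2^\dagger\vK_2}\le 2c$, so conditioned on any past outcomes the post-measurement state yields a nonzero outcome with probability at most $2c$ (by the remark following the definition). Consequently
\begin{align*}
|X^{(j)}|\le \frac{2}{cL}N_{\ne 0}\lesssim \frac{N_{\ne0}}{C},
\end{align*}
where $N_{\ne0}$ has mean $\CO(C)$ and Chernoff tails. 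To make this rigorous I would use a martingale/MGF argument. For $\lambda=\CO(1)$ and any conditioning history, the conditional MGF of a single step satisfies
\begin{align*}
\BE\big[e^{\lambda a_\ell/L}\,\big|\,\text{past}\big]\le 1+2c\big(e^{2\lambda/(cL)}-1\big)\le e^{\CO(\lambda/L)}.
\end{align*}
Iterating this over $\ell$ gives $\BE[e^{\lambda |X^{(j)}|}]\le e^{\CO(\lambda)}$, so $X^{(j)}$ is subexponential with $\CO(1)$ parameters. This is the content I expect \autoref{lem:measuremetnTailBound} in \autoref{apx:repeatTailBond} to provide, and I would invoke it here. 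Any residual logarithmic factor in $L$ is absorbed into $\bigOt{1/c}$.

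\textbf{Conclusion.} With i.i.d., unbiased, $\CO(1)$-subexponential $X^{(j)}$, Bernstein's inequality gives
\begin{align*}
\Pr\Big[\Big|\tfrac1S\sum_j X^{(j)}-\tr[\vrho\vA]\Big|>\epsilon\Big]\le 2e^{-\Omega(S\epsilon^2)}\qquad(\epsilon\le1),
\end{align*}
so $S=\CO(\log(1/\delta)/\epsilon^2)$ samples suffice. The subtle point is that the within-sample correlations enter only through the tail bound on $X^{(j)}$, never through unbiasedness. The unbiasedness argument relies on marginal stationarity alone.
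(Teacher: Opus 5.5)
Your proof is correct. It differs from the paper's only in how you get concentration of the per-copy average.

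The paper invokes \autoref{lem:measuremetnTailBound}, which couples the indicators of nonzero outcomes to i.i.d.\ Bernoulli$(c)$ variables. That coupling yields two things:
\begin{itemize}
\item a second-moment bound $\BE[s^2]\le 8+\frac{8}{c\ell}$, which suffices for median-of-means with $\ell\ge 1/c$, although that estimator is not unbiased;
\item a tail and truncation bound, $\Pr[|s|\ge 4]\le\eta$ and $\BE[|s|-\min(4,|s|)]\le\eta$.
\end{itemize}
For the unbiased plain average, the paper takes $\ell=\Theta\big(\frac1c\log\frac{1}{\veps\delta}\big)$. A union bound over the $r$ copies then puts all $s^{(i)}$ in $[-4,4]$ with high probability. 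The bounded Bennett--Bernstein inequality is applied to the truncated variables, and the truncation bias is absorbed.

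You instead iterate the conditional MGF bound through the tower property, which is a supermartingale argument. This gives $\BE[e^{\lambda|X^{(j)}|}]\le e^{\CO(\lambda)}$ directly, and you then apply Bernstein's inequality for sub-exponential variables.

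What your route buys is an unbiased estimator using only $L=\CO(1/c)$ repetitions per copy, with no logarithmic factor, no truncation, and no union bound over copies. What the paper's route buys is that it needs only variance and bounded-range inequalities.

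One small correction: the appendix lemma does not state an MGF bound, as you guessed. It gives the second-moment and tail/truncation bounds above. Your argument is self-contained, so this does not matter. The paper's coupling would also give the same MGF bound, since $\BE[e^{\lambda\,\mathrm{Bin}(k,c)/(kc)}]\le\exp\big(kc(e^{\lambda/(kc)}-1)\big)$.
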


\begin{proof}[Proof of~\autoref{lem:oneObs}]
We will apply the detailed-balance measurement for $\ell=\bigOt{\frac{1}{c}}$ times on each copies of $\vrho$ 
\begin{align*}
    \CE^{\ell(1)}[\vrho] \rightarrow (a_1^{(1)},a_2^{(1)},\ldots,a_\ell^{(1)}) \rightarrow s^{(1)}\\
    \cdots\\
    \CE^{\ell(r)}[\vrho] \rightarrow (a_1^{(r)},a_2^{(r)},\ldots,a_\ell^{(r)}) \rightarrow s^{(r)}
\end{align*}
where
\begin{align*}
    \CE^{\ell(i)} = \undersetbrace{\ell \quad \text{times}}{\CE^{(i)}\circ \cdots \circ \CE^{(i)}}.
\end{align*}
Since the measurement channel is detailed-balanced, after each application of the channel (ignoring previous measurement results, i.e., tracing out ancillas) the quantum state is Gibbs. Thus the random variable $a_j$ from \autoref{eq:simpleSample} associated to the $j$-th application of the measurement channel satisfies $\BE[a_j]=\tr[\vrho\vA]$.

Since in each application of the channel, the probability of seeing ancilla label $1$ is at most $c$, independent of any previous measurement outcomes (the same holds for label $2$ as well), an elementary exercise in probability (see \autoref{lem:measuremetnTailBound}) shows that taking the average of $\ge \frac{1}{c}$ subsequent measurements $a_j$ yields a sample mean $s=\frac{a_1+\ldots + a_\ell}{\ell}$ with bounded variance while by the linearity of expectation we have $\BE[s]=\tr[\vrho\vA]$. Then using standard median-of-means estimation (see \autoref{lem:MOM}) we can obtain an estimate of $\BE[s]=\tr[\vrho\vA]$ with the desired precision and success probability using the independent measurement statistics $s^{(i)}$ from $r=\bigO{\frac{\log(1/\delta)}{\epsilon^2}}$ copies of the Gibbs state.

If we take the average over slightly more, $\ell=\Theta\left(\frac{1}{c}\log\left(\frac{1}{\veps\delta}\right)\right)$ subsequent applications of the measurement channel, then $|s|\leq 4$ with probability at least $1-\bigO{\delta\frac{\veps^2}{\log(1/\delta)}}$ (see \autoref{lem:measuremetnTailBound}), therefore taking the average of the independent measurement statistics $s^{(i)}$ from $r=\bigO{\frac{\log(1/\delta)}{\epsilon^2}}$ copies of the Gibbs state produces an unbiased estimator of $\BE[s]=\tr[\vrho\vA]$ that is within precision $\veps$ with probability at least $1-\delta$ as shown by the Bennett-Bernstein bound (\autoref{prop:BennettB} applied to the $[-4,4]$ truncated version of $s$).
\end{proof}

\begin{proof}[Proof of \autoref{thm:main}]
The idea is to simply apply the procedure of \autoref{lem:oneObs} to each of the $M$ observables subsequently, but with an elevated success probability $1-\frac{\delta}{2M}$. Then, by the union bound, we get an estimate of all expectation values that are correct simultaneously with probability at least $1-\frac{\delta}{2}$.
Applying the protocol of \autoref{lem:oneObs} to different subsequent observables is possible, because we only apply detailed-balanced quantum channels on each copies of the Gibbs state, therefore each time we apply \autoref{lem:oneObs} its assumption is satisfies, we have $r$ independent copies of the Gibbs state (when ignoring previous measurement outcomes).\footnote{Considering a single Gibbs state the sequential statistics of the $M$ different observables $(s_{1}^{(1)},\ldots,s_{M}^{(1)}),$ may be highly \textit{correlated}, however this does not matter, as we are using a union bound regarding the error probability of different observables.}

We can implement each measurement channel to precision $\frac{\delta}{2 M r \ell}=\poly\left(\frac{M}{\delta\veps}\right)$ in diamond norm according to \autoref{lem:discMeasChannel}. Then, by the composition properties of the diamond norm, the entire protocol is implemented to diamond norm error $\frac{\delta}{2}$, implying that the measurement statics cannot change by more $\frac{\delta}{2}$ compared to the protocol using the ideal channels, so we succeed with probability at least $1-\delta$ even using the implemented slightly noisy channels.
\end{proof}

\subsection{Continuous-time approach}

While the above exploits discrete-time Gibbs samplers, an analogous result can be morally achieved with continuous-time Gibbs sampling, which is very similar to the discrete-time case, except that the measurement outcomes are described by jump occurrences.

The main idea is to apply a sequential detailed-balance continuous measurement for $t=\bigO{1}$ and $\vA_1,\cdots, \vA_M,$ and repeat $r$ times on $r$ copies of $\vrho$
\begin{align*}
    &\CE^{(1)}_{M}\cdots\CE^{(1)}_1[\vrho]\quad \text{such that}\quad \exp(\CL^{(1)}_{M})\cdots\exp(\CL^{(1)}_{1})[\vrho]=\vrho \\
    &\cdots\\
    & \CE^{(r)}_{M}\cdots\CE^{(r)}_1[\vrho] \quad \text{such that}\quad \exp(\CL^{(r)}_{M})\cdots\exp(\CL^{(r)}_{1})[\vrho]=\vrho.
\end{align*}
The rest of the argument is completely analogous to the above discrete case and~\autoref{lem:continuous_time_one}. 

For completeness, we provide explicit definitions of the required process, and provide a high-level proof.  
\begin{defn}[Detailed-balance continuous measurements]
    Given a Hermitian observable $\vA$ and a full-rank quantum state $\vrho$, we say a collection of Kraus operators $\{\vL_i\}$, forms a $\vrho$-detailed-balance measurements for $\vA$ if:
    \begin{itemize}
    \item (Detailed-balanced, on average) When marginalizing the measurement outcomes, the resulting Lindbladian $\CL = \sum_i \vL_i(\cdot)\vL^{\dagger}_i- \frac{1}{2}\{\vL_i^{\dagger}\vL_i,\cdot\}$ is detailed-balanced.
    \item (informative outcomes) A linear combination of the rates $\braket{\vL_i^{\dagger}\vL_i}$ gives an unbiased estimator for $\braket{\vA}.$
\end{itemize}
\end{defn}
It will also be useful to distinguish between the measurement protocol and the averaged channel. Unlike the discrete-time version, the space of possible outcomes is all possible space-time occurrences, which has an infinite Kraus rank~\cite{li2022SimMarkOpen}

\begin{align*}
    \CE^{t} &= \sum_{k=0}^{\infty} \iint_{t>t_k > \cdots> t_1 >0} \vK_{t_k\cdots,t_1}\cdot \vK_{t_k\cdots,t_1}^{\dagger}\otimes \ket{i_k,t_k,\cdots,i_1,t_1}\bra{i_k,t_k,\cdots,i_1,t_1}\ \rd t_1\cdots \rd t_k
\end{align*}
such that $\tr_{anc}[\CE^{t}] = e^{\CL t}.$ In the actual implementation with a finite-sized ancilla, we will introduce discrete time steps
\begin{align*}
    \bar{\CE}^{t} = \sum_{k=0}^{\infty} \sum_{t\ge \bar{t}_k\cdots \ge \bar{t}_1\ge 0} \!\L(\iint_{t_i \in(\bar{t}_i-\frac\delta2,\bar{t}_i+\frac\delta2]} \!\vK_{t_k\cdots,t_1}\cdot \vK_{t_k\cdots,t_1}^{\dagger}\rd t_1\cdots \rd t_k \! \R)\otimes \ket{i_k,\bar{t}_k,\cdots,i_1,\bar{t}_1}\bra{i_k,\bar{t}_k,\cdots,i_1,\bar{t}_1}
\end{align*}
such that $\tr_{anc}[\bar{\CE}^{t}] = e^{\CL t}.$ Essentially, the events are ``rounded'' to the closest discrete times.

\begin{lem}[Detailed-balanced continuous measurements from Gibbs sampling]
For any self-adjoint observable $\vA,$ and Hamiltonian $\vH,\beta$, there is a Detailed-balanced continuous measurement $\{\vL_i\}$ such that 
    \begin{align*}
    \tr[\vL^{\dagger}_1\vL_1\vrho]-\tr[\vL^{\dagger}_2\vL_2\vrho] = \frac{1}{2}\tr[ \vA\vrho].
    \end{align*}
    Furthermore, $\bar{\CE}^t$ for some discretization can be implemented using $\tCO(t\beta)$ Hamiltonian simulation time and $\tCO(1)$ block-encodings for the observable $\vA,$ and $\tCO(1)$ many ancillas, assuming (for simplicity) that $\vA$ is provided with an $\bigOt{1}$-ancilla block-encoding.\footnote{Note that the $\tCO(\cdot)$ hides dependence on the discretization precision.}
\end{lem}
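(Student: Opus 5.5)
I will mirror the proof of \autoref{lem:discMeasChannel}, replacing the discrete-time channel by a Lindbladian built from the same polarized operators.

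\textbf{Construction.} As before, let $\hat{\vA}_f=\int f(t)e^{\ri\vH t}\vA e^{-\ri\vH t}\rd t=\sum_\nu \vA_\nu g(\nu)$, where $g(\nu)=\hat f(-\nu)$ is the shifted Gaussian with $g(\nu)=g(-\nu)e^{-\beta\nu/2}$ and $\int|f|=1$. Set
\begin{align*}
\vL_1:=\sqrt{\tfrac{1}{2}e^{\sigma^2/8}}\,\vA_+,\qquad \vL_2:=\sqrt{\tfrac{1}{2}e^{\sigma^2/8}}\,\vA_-,\qquad \vA_\pm=\tfrac12\vI\pm\tfrac12\hat{\vA}_f .
\end{align*}
No normalizing constant $c$ is needed here, because in continuous time the rates only have to be bounded; for $\sigma\le 2$ they are $\CO(1)$. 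The computation already done for \autoref{lem:discMeasChannel} gives
\begin{align*}
\vL_1^\dagger\vL_1-\vL_2^\dagger\vL_2=\tfrac14 e^{\sigma^2/8}\bigl(\hat{\vA}_f+\hat{\vA}_f^\dagger\bigr).
\end{align*}
Since $\vrho$ is diagonal in the energy basis, only the $\nu=0$ Bohr component survives in the trace, and $\mathrm{Re}\,g(0)=e^{-\sigma^2/8}$. This yields exactly $\tfrac12\tr[\vA\vrho]$.

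\textbf{Detailed balance.} The property $e^{\beta\vH/2}\vA_\pm e^{-\beta\vH/2}=\vA_\pm^\dagger$ makes the transition part $\sum_i\vL_i\cdot\vL_i^\dagger$ KMS-detailed-balanced. The decay term $-\frac12\{\vL_i^\dagger\vL_i,\cdot\}$ generally is not, so I will add a coherent term $-\ri[\vB,\cdot]$ following \cite{ding2024GibbsSamplingViaKMS,chen2023ExactQGibbsSampler}. Here $\vB$ is constructed from the Bohr components of $\sum_i\vL_i^\dagger\vL_i$ through the standard $\tanh(\beta\nu/4)$ filter, and it makes the full $\CL$ exactly detailed-balanced. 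Because $\vB$ is a Hamiltonian term rather than a jump, it produces no outcomes and does not affect the informative-outcome property. I expect this step to be the main subtlety, since the paper's discrete-time proof instead used the rejection Kraus operator $\vK_0$.

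\textbf{Implementation.} I will discretize time into steps of size $\delta$. In each step, I apply a weak-measurement channel with Kraus operators $\sqrt{\delta}\vL_1$, $\sqrt{\delta}\vL_2$, and a no-jump operator approximating $\vI-\ri\delta\vB-\frac{\delta}{2}\sum_i\vL_i^\dagger\vL_i$, recording which jump label occurred (and hence the rounded time $\bar t$). The standard first-order Lindbladian simulation analysis \cite{cleve2016EffLindbladianSim,li2022SimMarkOpen} bounds the deviation of the marginal from $e^{\CL t}$. The block-encodings of $\hat{\vA}_f$ use Gaussian-weighted LCU over $e^{\pm\ri\vH s}$ with $|s|\lesssim\beta\sigma\sqrt{\log}$, costing $\tCO(\beta)$ Hamiltonian time per use. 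The block-encoding of $\vB$ comes from \cite{ding2024GibbsSamplingViaKMS,chen2025EfficientQThermSim} at $\tCO(\beta)$ cost.

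\textbf{Total cost.} A higher-order or compressed scheme (e.g., \cite{li2022SimMarkOpen}, or amplitude amplification as in the discrete case) gives $\tCO(t\beta)$ total time with polylogarithmic dependence on the discretization precision. The remaining resources are $\tCO(1)$ block-encodings and ancillas per unit time, plus a register recording the rounded jump times.
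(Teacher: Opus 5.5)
Your proposal is correct and follows the paper's outline. The paper's proof takes the Lindbladian with jumps $\vA_\pm$ from \autoref{lem:discMeasChannel} and invokes high-precision Lindbladian simulation, storing jump occurrences in compressed form. Your implementation does the same, and, as you note, a plain first-order step would not give polylogarithmic precision dependence, so the $\tCO(t\beta)$ claim rests on the compressed simulation.

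The substantive difference is your coherent term, and there your version is the more careful one. The paper uses the jump-only form of its Definition, $\CL=\sum_\pm\vA_\pm(\cdot)\vA_\pm^\dagger-\frac12\{\bm{R},\cdot\}$ with $\bm{R}:=\sum_i\vL_i^\dagger\vL_i\propto\vI+\hat{\vA}_f^\dagger\hat{\vA}_f$. Using $\vA_\pm\vrho^{1/2}=\vrho^{1/2}\vA_\pm^\dagger$ one finds $\CL[\vrho]=\vrho^{1/2}\bm{R}\vrho^{1/2}-\frac12\{\bm{R},\vrho\}$. In an energy eigenbasis with $\vrho=\mathrm{diag}(p_i)$, its entries are $-\frac12(\sqrt{p_i}-\sqrt{p_j})^2\bm{R}_{ij}$. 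So unless $\hat{\vA}_f^\dagger\hat{\vA}_f$ commutes with $\vH$, $\vrho$ is not even stationary; this already fails for a qubit observable with nonzero diagonal and off-diagonal parts.

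Your $\bm{B}=\sum_\nu\frac{\ri}{2}\tanh(\beta\nu/4)\bm{R}_\nu$ fixes this. It is Hermitian and gives $\vrho^{-1/2}(\frac12\bm{R}+\ri\bm{B})\vrho^{1/2}=\frac12\bm{R}-\ri\bm{B}$, which together with the KMS property of the transition part yields exact detailed balance. You are also right that it produces no outcomes.

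$\bm{B}$ is in fact the infinitesimal version of the paper's discrete fix. If $\CT=\delta\sum_i\vL_i(\cdot)\vL_i^\dagger$, then $\vK_0=\sqrt{\sqrt{\vrho}(\vI-\CT^\dagger[\vI])\sqrt{\vrho}}\,\vrho^{-1/2}=\vI-\delta(\frac12\bm{R}+\ri\bm{B})+\CO(\delta^2)$. So your no-jump operator is the $\delta\to0$ limit of the construction in \autoref{lem:discMeasChannel}.

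The cost is twofold. You step outside the paper's Definition, so say explicitly that a Hamiltonian part is allowed. You also need an extra $\tCO(\beta)$ block-encoding of $\bm{B}$. To stay within the Definition verbatim, you could instead take $\CL\propto\CN-\mathrm{id}$, with $\CN$ the trace-preserving channel $\{\vK_0,\vK_1,\vK_2\}$, and record $\vK_0$ as an uninformative outcome.

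Your rescaling of the jumps is likewise what actually produces the stated factor $\frac12$; the literal choice $\vL_{1,2}=\vA_\pm$ gives $\mathrm{Re}\,g(0)\tr[\vA\vrho]$. There is one slip, inherited from the discrete proof. For $g(\nu)=\exp(-(\beta\sigma\nu+1/\sigma)^2/8)$ one has $g(0)=e^{-1/(8\sigma^2)}$, not $e^{-\sigma^2/8}$, so the prefactor should be $\frac12e^{1/(8\sigma^2)}$. The rates are then $\CO(1)$ for $\sigma$ bounded away from zero, rather than for $\sigma\le 2$. This is consistent with the large $\sigma$ the paper uses for truncation.
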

\begin{proof}
Consider the Lindbladian with two Lindblad operators $\vA_+ = \vL_1,\vA_-=\vL_2$ from~\autoref{lem:discMeasChannel} and invoke high-precision Lindbladian simulation algorithms~\cite{cleve2016EffLindbladianSim,chen2023QThermalStatePrep}. Notice that in the implementation, the ancilla stores the sequence of Lindblad operators in a compressed form, where the occurrences are stored.
\end{proof}

The way Lindbladian simulation algorithms~\cite{cleve2016EffLindbladianSim,chen2023QThermalStatePrep} discretize the time-evolution is essentially by iterating an approximate implementation of the Lindbladian dynamics for a tiny discrete time-step, and making the approximation such that only one jump is allowed in that discrete time-window. So we can think about the discrete time-steps of the resulting dynamics as a special case of the discrete-time measurement channel from \autoref{lem:discMeasChannel}, but with a tiny $c$ subnormalization. Because of this, our previous analysis applies. 

\begin{cor}[Estimating one observable through detailed-balanced continuous measurements]\label{lem:continuous_time_one}
     Consider observables $\vA$ such that $\norm{\vA}\le 1$ and a Hamiltonian $\vH$.
     Then, we can estimate $\tr[\vrho\vA]$ to error $\epsilon$ with failure probability $\delta$ using
     \begin{align*}
     S &= \CO(\frac{\log(1/\delta)}{\epsilon^2})\quad \text{samples of Gibbs state} \quad \vrho.
     \end{align*}
 \end{cor}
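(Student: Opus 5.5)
The plan is to reduce \autoref{lem:continuous_time_one} to the discrete-time argument of \autoref{lem:oneObs}. Take the Lindbladian $\CL$ with jump operators $\vL_1=\vA_+$ and $\vL_2=\vA_-$ from the preceding lemma, and discretize the evolution $e^{\CL t}$ into steps of length $\tau$. Each step is implemented, as in \cite{cleve2016EffLindbladianSim,chen2023QThermalStatePrep}, by a map that allows at most one jump per window. Its Kraus operators are $\vK_1\approx\sqrt{\tau}\,\vL_1$, $\vK_2\approx\sqrt{\tau}\,\vL_2$, and a no-jump operator $\vK_0\approx \vI-\frac{\tau}{2}\sum_i\vL_i^\dagger\vL_i$.

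The first key step is to make each step exactly detailed balanced rather than only approximately. To do this, I would use the rejection-term construction of \autoref{lem:discMeasChannel}: set $\CT=\tau(\vA_+[\cdot]\vA_+^\dagger+\vA_-[\cdot]\vA_-^\dagger)$, which is detailed balanced by the KMS property $e^{\beta\vH/2}\vA_\pm e^{-\beta\vH/2}=\vA_\pm^\dagger$, and complete it with $\vK_0=\sqrt{\sqrt{\vrho}(\vI-\CT^\dagger[\vI])\sqrt{\vrho}}\,\vrho^{-1/2}$. This step is then a detailed-balance measurement channel in the sense already established, with subnormalization $c=2\tau$, since $\nrm{\vK_i^\dagger\vK_i}\le\tau$. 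It also satisfies $\tr[\vK_1^\dagger\vK_1\vrho]-\tr[\vK_2^\dagger\vK_2\vrho]=\frac{\tau}{2}\tr[\vrho\vA]$, using the identity $\tr[\vrho(\vA_+^\dagger\vA_+-\vA_-^\dagger\vA_-)]\propto\tr[\vrho\vA]$ derived in the proof of \autoref{lem:discMeasChannel}. As $\tau\to0$, the composition of $t/\tau$ such steps converges to $e^{\CL t}$ in diamond norm, with error $\CO(t\tau)$, so the recorded jump records converge to $\bar{\CE}^t$.

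Next I would run the estimator. Assign $a=\pm\frac{2}{\tau}$ according to the outcome (label $1$ or $2$) of each step, and $a=0$ otherwise. Equivalently, take $s=\frac{2}{t}(N_1-N_2)$, where $N_i$ counts the jumps of type $i$ in time $t$. By exact detailed balance, the marginal state before every step is $\vrho$, so $\BE[a_j]=\tr[\vrho\vA]$ and $s$ is unbiased. Choosing total time $t=\ell\tau=\Theta(\log(1/(\veps\delta)))$, which corresponds to $\ell\sim 1/c$ steps, lets me invoke \autoref{lem:measuremetnTailBound}. Conditioned on any history, each step emits label $1$ (or label $2$) with probability at most $\tau$, so $N_1,N_2$ are stochastically dominated by Poisson variables of mean $\CO(t)$. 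Hence $s$ has bounded variance, and $|s|\le 4$ except with probability $\poly(\veps\delta)$. Median-of-means (\autoref{lem:MOM}), or the truncated Bennett--Bernstein argument (\autoref{prop:BennettB}), over $r=\CO(\log(1/\delta)/\epsilon^2)$ independent Gibbs copies then gives the claim.

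The main obstacle is the first step. Standard Lindbladian simulation yields $e^{\CL\tau}$ only approximately, and its no-jump operator need not make each single step exactly $\vrho$-stationary. To handle this I would try the rejection-term construction first, since it is exact by construction. If that is not available in the simulation algorithm, I would control the accumulated deviation. That deviation is at most $\CO(t\tau)$ in diamond norm over all steps, plus the implementation error $\veta$. Both can be made $\le\delta/(2r)$ per copy with only polylogarithmic cost, so the outcome statistics change by at most $\delta/2$, exactly as in the proof of \autoref{thm:main}.
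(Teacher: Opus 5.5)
Your proposal is correct and follows the paper's own argument: you treat each small time step as an instance of the discrete-time detailed-balance measurement channel of Lemma~\ref{lem:discMeasChannel} with subnormalization $c\sim\tau$, then rerun the analysis of Lemma~\ref{lem:oneObs} (Lemma~\ref{lem:measuremetnTailBound} plus median-of-means or truncated Bennett--Bernstein), and your explicit rejection Kraus operator makes rigorous the per-step exact $\vrho$-stationarity that the paper leaves implicit. The one slip is a constant: with $\CT=\tau(\vA_+[\cdot]\vA_+^\dagger+\vA_-[\cdot]\vA_-^\dagger)$ the difference of outcome probabilities is $\tau\, g(0)\tr[\vrho\vA]$, not $\frac{\tau}{2}\tr[\vrho\vA]$, so you should rescale $\CT$ by $1/(2g(0))$ as the paper does, or adjust the values $\pm\frac{2}{\tau}$ accordingly.
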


 The same way as in the proof of \autoref{thm:main} we can use the above result subsequently to all $M$ measurement operators with elevated success probability $1-\frac{\delta}{2M}$ to recover all the $M$ expectation values to precision $\veps$ with probability at lest $1-\delta$.

\subsection{Tight sample bounds}

\begin{thm}\label{thm:tightSample}
    Let $\veps,\delta\in (0,\frac{1}{10})$, $M\geq 1$. Then, we can choose $n=\Omega\big(\log(\frac{\log(M)}{\veps\delta})\big)$ such that there is a family of classical Hamiltonians $H\in\{0,1\}^{2^n\times 2^n}$ and an inverse temperature $\beta=\bigO{n}$
    for which estimating $M$ bounded observables $|O_i|\leq1$ each to precision at least $\veps$ with success probability at least $1-\delta$ requires $\Omega\big(\frac{\log(M/\delta)}{\veps^2}\big)$ Gibbs samples even if we are allowed to make $2^{\frac{n}{3}}$ (controlled) quantum queries to the phase oracle $\ket{j}\rightarrow(-1)^{H_{\! jj}}\ket{j}$.
\end{thm}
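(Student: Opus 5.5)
The plan is to reduce to a classical statistical lower bound for estimating many means. The Hamiltonian is chosen from a family in which the phase oracle is useless with only $2^{n/3}$ queries. The learner is then left with Gibbs samples, which are classical samples from a hidden distribution. The lower bound will come from Fano's inequality for the $\log M$ part and from Le Cam's two-point method for the $\log(1/\delta)$ part.

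\textbf{Classical core.} I will use a sample space $[m]$ with $m=\Theta\big(\log(M)/\veps^2\big)$, possibly enlarged by a $\mathrm{poly}(1/\delta)$ factor. Choose $M$ balanced sign vectors $v_1,\dots,v_M\in\{\pm1\}^m$ with pairwise correlations at most $\veps$; random vectors suffice since $m\gtrsim\log M/\veps^2$. Let $O_i:=\mathrm{diag}(v_i)$, lifted to the blocks defined below, so that $|O_i|\le1$. The hypotheses are $p_0$ uniform on $[m]$ and $p_k(x)=(1+3\veps v_k(x))/m$ for $k\in[M]$. Then $\BE_{p_k}[O_i]$ is about $3\veps$ when $i=k$ and is $O(\veps^2)$ otherwise. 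Hence any estimator that is $\veps$-accurate on all $O_i$ identifies $k$. Since $\mathrm{KL}(p_k^{\otimes S}\|p_0^{\otimes S})=O(S\veps^2)$, Fano's inequality forces $S=\Omega(\log M/\veps^2)$. Separately, Le Cam's bound for $p_0$ versus $p_1$ with error $\delta$ forces $S=\Omega(\log(1/\delta)/\veps^2)$. Together these give $\Omega(\log(M/\delta)/\veps^2)$.

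\textbf{Embedding into Gibbs states.} Split the $2^n$ basis states into $m$ equal blocks, one per outcome $x\in[m]$. Let $H$ be diagonal with entries in $\{0,1\}$, and set $H_{jj}=0$ on a set $Z_x$ inside block $x$ with $|Z_x|\propto p_k(x)$, and $H_{jj}=1$ elsewhere. Take $\beta=\Theta(n)$, say $e^{-\beta}\le 2^{-3n}$. The Gibbs state is then, up to total variation $2^{-\Omega(n)}$, the uniform distribution on $\bigcup_x Z_x$. Measuring the block index therefore reproduces $p_k$. The $2^{-\Omega(n)}$ slack, and the granularity of integer counts $|Z_x|$, are absorbed by taking $n=\Omega\big(\log(\log M/(\veps\delta))\big)$.

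\textbf{Removing the oracle's usefulness.} This is the step I expect to be the main obstacle. I will randomize the positions of the zeros: each $Z_x$ is a uniformly random subset of its block. The zero sets should also be sparse, with $|Z_x|\approx 2^{n/2}/m$ in blocks of size $2^n/m$. Hypotheses then differ only through the counts $|Z_x|$, which differ by factors $1\pm3\veps$. I would argue that $q=2^{n/3}$ queries cannot distinguish the induced oracle distributions. One route is a coupling under which $H^{(k)}$ and $H^{(0)}$ differ on a random set of $O(\veps 2^{n/2})$ positions hidden among $2^n$, followed by the BBBV/hybrid argument. This bounds the distinguishing advantage by $O(q\sqrt{\veps2^{n/2}/2^n})=O(2^{n/3-n/4})$, which is too weak.

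To fix this I would instead make zeros rarer, e.g.\ $|Z_x|\sim 2^{n/4}$ in total. The perturbation set then has size $O(2^{n/4})$, and the advantage becomes $O(q\,2^{-3n/8})=O(2^{-n/24})$. Choosing $n$ large enough makes this at most $\delta$. The sparser zero sets still leave enough resolution to encode $p_k$, because $m=\mathrm{poly}(\log M/\veps)\ll 2^{n/8}$.

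\textbf{Conclusion.} Conditioned on the samples, the oracle transcript is nearly independent of $k$, with total variation $O(\delta)$ after accounting for the joint dependence. A data-processing argument therefore reduces any quantum learner to a classical learner that sees only $S$ samples from $p_k$, up to an additive $O(\delta)$ in success probability. The Fano and Le Cam bounds from the classical core then apply. The delicate point is handling the correlation between the samples and the oracle. I would handle it by having the samples reveal only the block index $x$, obtained by measuring the coarse register, which leaks nothing about positions within $Z_x$ beyond what the coupling already accounts for.
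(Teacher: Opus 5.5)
\textbf{There is a genuine gap in the oracle-removal step.} Your classical core is sound. You use Fano for the $\log M$ term and Le Cam for the $\log(1/\delta)$ term, with nearly orthogonal balanced sign-vector observables on $m=\Theta(\log M/\veps^2)$ bins. This is a legitimate alternative to the paper's route, which reduces to learning a distribution on $m=\lfloor\log_2 M\rfloor$ bins in total variation via the subset observables $O_b$. The problem is how you handle the correlation between samples and oracle.

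A lower bound must hold for every learner. Since $\rho_\beta$ is diagonal, a learner may measure each copy in the computational basis and obtain the exact zeros $y_1,\dots,y_S$ of $H$. You cannot stipulate that only the coarse block register is read. Once the queries depend on $y_1,\dots,y_S$, which are functions of the hidden zero set, your unconditional coupling of $H^{(k)}$ with $H^{(0)}$ followed by BBBV no longer applies. The hybrid bound holds only for a query algorithm fixed independently of the random oracle; a learner that simply queries its sampled positions already violates that assumption. Moreover, even a successful swap $H^{(k)}\to H^{(0)}$ would not give a sample-only algorithm. $H^{(0)}$ is still unknown and, under your coupling, shares almost all its zeros with the set the samples were drawn from. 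So your claim that ``conditioned on the samples, the oracle transcript is nearly independent of $k$'' is asserted, not proved.

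The missing idea, which the paper uses, is to condition on the full samples. You then compare the true oracle with a dummy $f_0$ that equals $1$ everywhere except at the sampled positions, so $f_0$ is a function of the samples alone. Conditioned on $y_1,\dots,y_S$, the unrevealed zeros are uniformly placed within their bins. Hence each fixed position is an unrevealed zero with probability $\lesssim |Z|/2^n$, and the hybrid lemma bounds the advantage by $2q\sqrt{|Z|/2^n}$. Your sparsity $|Z|\sim 2^{n/4}$ already suffices here; the paper takes $|f^{-1}(0)|$ of order $\delta^2 2^{n/3}$.

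After this replacement the learner is sample-only, but it still sees exact positions. You therefore also need the birthday argument of \autoref{lem:stringSampLower}. Repeated samples are informative, since their rate inside block $x$ is $1/|Z_x|$, which depends on $k$. Only when $|Z|\gg S^2/\delta$ are the $S$ samples distinct with probability $1-\delta$. In that case they can be simulated from block indices drawn from $p_k$ by outputting fresh uniform positions in the corresponding block, with $f_0$ built from the simulated positions. With these two ingredients in place of your coupling step, your Fano/Le Cam bound applies to the block-index samples and the proof goes through.
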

Before proving the theorem let us mention that the above is a tight bound as according to the Chernoff-Hoeffding bound~(\autoref{prop:ChernoffH}) we can estimate each individual observable to precision $\veps$ with success probability at least $1-\frac{\delta}{M}$ using $\bigO{\frac{\log(M/\delta)}{\veps^2}}$ Gibbs samples. By the union bound, we get the desired result.

Our proof's core is a reduction to the known sample complexity lower bound $\Omega\left(\frac{m+\log(1/\delta)}{\veps^2}\right)$ for learning a distribution on $m$ elements to total variation distance $\veps$ with success probability at least $1-\delta$ \cite[Theorem 1.3]{cannone2022DistributionTestingSurvey},\cite[Lecture 4]{diakonikolas2019AdvancedLearningTheoryLectureNotes}. We apply this lower bound to a distribution on $m$ elements induced on $m$ bins of the configuration space by a specific zero-temperature Gibbs state corresponding to a classical $\{0,1\}$-valued Hamiltonian, that is simply a uniform distribution over about $2^{\frac{n}{3}}$ zero-energy states, while all excited states have energy $1$. Because the spectrum is binary, after seeing a sample, querying that element does not reveal any new information. Also because the zero energy states are exponentially rare (have frequency about $\sim 2^{\frac{-2n}{3}}$) it requires an exponential amount of queries to obtain any previously unseen sample. This tells us that the query access is essentially useless, unless an exponential number of queries are used. 

We use the following lemma to show that the zero-temperature state of our classical binary Hamiltonian on $n$-bits is close to a reasonable temperature Gibbs state, so that the simplification of working with a zero-temperature Gibbs state is not too restrictive:
    \begin{lem}\label{lem:tempTVBound}
        For every $\beta\geq \ln(2)n+r$ and $f:\{0,1\}^n\to\{0,1\}$ such that $|\{b:f(b)=0\}| \geq 1$ we have 
        \[\|\rho_\beta - \rho_\infty\|_{\mathrm{TV}}\leq e^{-r}.\]
    \end{lem}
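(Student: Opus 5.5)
Here $\rho_\beta$ is the classical Gibbs distribution $\rho_\beta(b)\propto e^{-\beta f(b)}$ on $\{0,1\}^n$, and $\rho_\infty$ is the uniform distribution on the zero set $Z:=\{b: f(b)=0\}$. Both distributions are explicit, so I will compute the total variation distance exactly and then bound it.

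Write $z:=|Z|\ge 1$ and $k:=2^n-z$ for the number of excited configurations. The partition function is $\mathcal{Z}_\beta = z + k e^{-\beta}$. Hence each $b\in Z$ has probability $\frac{1}{z+ke^{-\beta}}$ under $\rho_\beta$ and $\frac{1}{z}$ under $\rho_\infty$. Each excited configuration has probability $\frac{e^{-\beta}}{z+ke^{-\beta}}$ under $\rho_\beta$ and $0$ under $\rho_\infty$.

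Under $\rho_\beta$ every point of $Z$ loses mass compared to $\rho_\infty$, and every excited point gains mass. So the total variation distance equals the total mass $\rho_\beta$ places on excited states:
\[
\|\rho_\beta-\rho_\infty\|_{\mathrm{TV}} = \frac{k e^{-\beta}}{z+ke^{-\beta}} \le \frac{k e^{-\beta}}{z}\le 2^n e^{-\beta}.
\]
This uses $z\ge 1$ and $k\le 2^n$. If the paper's TV convention is the $\ell_1$ norm without the factor $\tfrac12$, the distance is twice this, namely $\frac{2ke^{-\beta}}{z+ke^{-\beta}}$. In that case I would keep the sharper form: bound it by $\frac{2k e^{-\beta}}{1+ke^{-\beta}}$ and note that $k e^{-\beta}\le 2^n e^{-\beta}$ is already tiny.

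Finally, substituting $\beta\ge \ln(2)n+r$ gives $2^n e^{-\beta}\le e^{\ln(2)n}e^{-\ln(2)n-r}=e^{-r}$, which is the claim.

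The only step needing care is the convention for $\|\cdot\|_{\mathrm{TV}}$ and the resulting factor of $2$. With the standard half-$\ell_1$ convention the bound holds as stated. Under the $\ell_1$ convention I would instead use the slack from $k<2^n$ (since $z\ge1$ forces $k\le 2^n-1$), or absorb the factor into $r$. I do not expect any genuine obstacle.
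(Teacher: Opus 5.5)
Your proof is correct and matches the paper's: under the half-$\ell_1$ convention (the one the paper uses), the TV distance is the $\rho_\beta$-mass on excited configurations, $\frac{(2^n-z)e^{-\beta}}{z+(2^n-z)e^{-\beta}}\le 2^n e^{-\beta}\le e^{-r}$. Your hedge for the unnormalized $\ell_1$ convention is unnecessary, and the slack from $k\le 2^n-1$ would not actually absorb the factor of $2$ (e.g.\ $z=1$ with large $n,r$ gives distance about $2e^{-r}$), but this does not affect the argument.
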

    \begin{proof}
        At inverse temperature $\beta$ the Gibbs distribution is $\rho_\beta(x) \propto e^{-\beta f(x)}$.
        Let us define $k := |\{b \colon f(b)=0\}| \geq 1$. 
        Since $f$ is Boolean, zeros receive weight~$1$ and ones receive weight~$e^{-\beta}$, so that
        $ Z_\beta \;=\; k + (2^n - k)\,e^{-\beta}$ and 
        \begin{equation*}\label{eq:test}
        	\rho_\beta = \frac{\chi[f(b)=0]+e^{-\beta}\chi[f(b)=1]}{Z_\beta}.
        \end{equation*}
        
        At $\beta = \infty$ the Gibbs state reduces to the uniform distribution over the $k$ zeros:
        $\rho_\infty = \frac{\chi[f(x)=0]}{k}$, assigning more weight to preimages of $0$ and less (actually zero) to preimages of $1$ compared to the finite temperature $\rho_\beta$. Therefore, using that $e^{-\beta}\leq 2^{-n}\cdot e^{-r}$, it is easy to see that 
         
        \begin{align*}
           \|\rho_\beta - \rho_\infty\|_{\mathrm{TV}}
          = \frac{1}{2}\sum_{b\in\{0,1\}^n} \!\!|\rho_\beta(b) - \rho_\infty(b)|
          =\! \sum_{b\colon f(b)=1} \!\!\rho_\beta(b)-0
          = (2^n\! - k)\frac{e^{-\beta}}{Z_\beta} 
          \leq (2^n\! - k)\frac{2^{-n}\cdot e^{-r}}{k}
          \leq e^{-r}.\!\tag*{\qedhere}
        \end{align*}
    \end{proof}

\begin{proof}
    First, let us establish the lower bound in the absence of queries to $H$, assuming that $n=\lfloor\log_2(\log_2(M)) \rfloor$. We will use the folklore bound that learning distributions on $m$ elements to total variation distance $\veps$ requires $\Omega\Big(\frac{m+\log(1/\delta)}{\veps^2}\Big)$ samples \cite[Theorem 1.3]{cannone2022DistributionTestingSurvey} even if we are promised that all $p_i\in[\frac{1}{2m},\frac{3}{2m}]$~\cite[Lecture 4]{diakonikolas2019AdvancedLearningTheoryLectureNotes}. Set $m=2^n$, and let us use the observables $\{\chi_s\colon s\subseteq \{0,1\}^n\}$ that correspond to the indicator functions of the $2^n$-element configuration space, i.e., for all $n$-bit configuration $c$ we have $\chi_s(c)=1$ if and only if $c\in s$. By definition, if we manage to provide an estimate $v_s$ of each expectation value $\BE[\chi_s]=\Pr[c\in s]$, to error $\veps$, then the Gibbs distribution $\rho=(p_1,p_2,\ldots,p_m)$ satisfies $|v_s - \sum_{c \in s} p_c|\leq \veps$. By the triangle inequality for all distribution $q$ satisfying $|v_s - \sum_{c \in s} q_c|\leq \veps$ we have     
    we have $|\sum_{c\in s}p_c - \sum_{c\in s}q_c|\leq 2\veps$, i.e., we can recover a probability distribution $(q_1,q_2,\ldots,q_m)$ on $m=2^n$ elements that is within total variation distance $2\veps$ of the Gibbs distribution. 

    Note that we can sculpture any Gibbs distribution $(p_1,p_2,\ldots,p_m)$ that satisfies $p_i\in[\frac{1}{2m},\frac{3}{2m}]$ while ensuring $\beta|H|=\bigO{1}$. The distribution learning lower bounds \cite[Theorem 1.3]{cannone2022DistributionTestingSurvey}, \cite[Lecture 4]{diakonikolas2019AdvancedLearningTheoryLectureNotes} then imply that $\Omega\Big(\frac{m+\log(1/\delta)}{4\veps^2}\Big)=\Omega\Big(\frac{\log(M/\delta)}{\veps^2}\Big)$ Gibbs samples are necessary for solving this estimation task, however one can also directly solve the problem by querying all $m=\bigO{\log(M)}$ diagonal entries of the classical Hamiltonian $H$.
    
    In the general case in order to prove that a subexponential number of queries to $H$ does not help significantly reducing the sample complexity, we will consider a carefully crafted family of classical Hamiltonians $H_f\in\{0,1\}^{2^n\times 2^n}$ whose diagonal elements are associated to a Boolean function $f\colon \{0,1\}^n\mapsto \{0,1\}$. The corresponding Gibbs state $\rho_\beta$ assigns probability $\rho_\beta(b) = e^{-\beta f(b)}/Z_\beta$ to configuration $b$ where  $Z_\beta = \sum_{b\in\{0,1\}^n}e^{-\beta f(b)}$. 

    By \autoref{lem:tempTVBound} $\beta\geq \ln(2)n + \ln(\frac{8S}{\veps \delta})=\Theta(n)$ ensures that \[\|\rho_\beta - \rho_\infty\|_{\mathrm{TV}}\leq \min\Big\{\frac{\veps}4,\frac{\delta}{2S}\Big\},\] implying that $\mathbb{E}_{\rho_\beta}[O_i]-\mathbb{E}_{\rho_\infty}[O_i]\leq \frac{\veps}{2}$, therefore estimating the expectation values of one of the distributions $\rho_\beta, \rho_\infty$ to precision $\frac{\veps}{2}$ suffices for estimating those of the other to precision $\veps$. Similarly, as long as one uses at most $S$ samples and the protocol succeeds with probability at least $1- \frac{\delta}{2}$ using one of the distributions, then the same protocol also succeeds with probability at least $1- \delta$ when samples of the other Gibbs state are provided.

    So if we can estimate all expectation values with respect to $\rho_\beta$ with precision $\frac{\veps}{2}$ and success probability at least $1-\frac{\delta}{2}$ using $S$ Gibbs samples $\rho_\beta$ and $Q$ queries to $f$, then we can also solve with the same algorithm the same estimation problem with respect to $\rho_\infty$ but with a doubled $\veps$ and $\delta$ parameter. Therefore, it suffices to prove a lower bound for this problem.

    To connect this scenario to the previous lower bound we will work with a distribution $p$ on $m=\lfloor\log_2(M)\rfloor$ elements induced by the Gibbs state $\rho_\infty$, and show how to use the estimation of $M$ expectation values of the Gibbs state $\rho_\infty$ to learning that distribution. 
    For this we partition the configuration space to bins of even size $\lceil\frac{2^n}{m}\rfloor$. We denote the indicator of the $\ell$-th bin by $\chi_\ell$. For each $b\in\{0,1\}^m$ we define the observable
    \begin{equation}\label{eq:splittingObs}
        O_b=\sum_{\ell\in [m]}b_\ell\chi_\ell.
    \end{equation}
    The same argument that we used at the beginning of the proof shows that successfully providing an estimate $v_b$ to all observable expectation values $\mathbb{E}[O_b]=\sum_{\ell\in [m]}b_\ell\mathbb{E}[\chi_\ell]$ of precision $\veps$ implies that we can recover a probability distribution $q=(q_1,q_2,\ldots,q_m)$ on $m$ elements that is within total variation distance $2\veps$ of the distribution $p=(\mathbb{E}[\chi_1],\mathbb{E}[\chi_1],\ldots,\mathbb{E}[\chi_m])$ induced by our Gibbs state $\rho_\infty$. 

    By a simple probabilistic reduction to the above cited distribution learning lower bound \cite[Theorem 1.3]{cannone2022DistributionTestingSurvey}, \cite[Lecture 4]{diakonikolas2019AdvancedLearningTheoryLectureNotes}, it is not difficult to see that as long as $f^{-1}(0)=\Omega\left(\frac{\log^2(M/\delta)}{\veps^4\delta}\right)$ and $n\geq \Omega\left(\log\left(\frac{m}{\veps\delta}\right)\right)$, seeing $\CO\Big(\frac{\log(M/\delta)}{\veps^2}\Big)$ uniformly random elements of $f^{-1}(0)$, i.e., taking that many samples from $\rho_\infty$, reveals essentially no more information from the distribution $p$ than direct sampling of $p$, therefore the same asymptotic lower bound $\Omega\Big(\frac{\log(M/\delta)}{\veps^2}\Big)$ holds for the number of required Gibbs samples. We leave the technical details of this routine reduction to \autoref{cor:stringSampLower} in \autoref{apx:stringSampLower}.

    Now we complete the argument by bounding the advantage given by the (quantum) queries. Because the samples are independent of everything the estimation protocol does, we can assume without loss of generality that it first receives $S$ samples of $\rho_\infty$ and subsequently makes $Q$ queries~to~$f$. The main idea of the lower bound proof is to choose a random\footnote{Note that we use here the same uniform distribution over the different realizations of a distribution for which the hardness is established by \autoref{cor:stringSampLower}, so the classical and quantum lower bound arguments match up.} $f$ with Hamming weight $\leq\delta^2 2^{\frac{n}{3}}$ so that once the algorithm receives the samples $\rho_\infty$ (which are just uniformly random preimages of $0$) it cannot gain any more information about the set $f^{-1}(0)$ with probability greater than $\delta$, when it makes $Q\le 2^{\frac{n}{3}}$ queries. The intuition behind this lower bound is that using the (optimal) Grover search algorithm with $Q\leq 2^{\frac{n}{3}}$ queries cannot even provide a single yet unseen element $f^{-1}(0)$ with probability greater than $\delta$.
    
    In order to formalize our intuition of the limited information retrieval from the queries we use the well-known quantum query lower bound technique called the \emph{hybrid method}~\autoref{lem:smallHybLow}, for which we provide a proof in \autoref{apx:smallHybLow} for completeness: 
    \begin{restatable}{lem}{hybMetArb}\label{lem:smallHybLow}%\textbf{\emph{(Hybrid method for arbitrary phase oracles)}}
    	Let $B$ be a (finite) set of labels and let $\CH:=\mathrm{Span}(\ket{b}\colon b\in B)$ be a Hilbert space. For a function $f:B\rightarrow \BR$ let $\mathrm{O}_{\!f}$ be the phase oracle acting on $\CH$ such that $$\mathrm{O}_{\!f}\colon\ket{b}\to \exp(\pi \ri f(b))\ket{b} \quad \text{ for every } b\in B.$$
    	Suppose that $\CF$ is a finite set of functions $B\rightarrow \{0,1\}$. Let $\vrho_{\CF}$ denote the output of a quantum algorithm that makes $T$ queries to a (controlled) phase oracle $\mathrm{O}_{\!f}$ for a fixed but random $f$ chosen according to some distribution on $\CF$, and let $\vrho_{f_0}$ denote its output when all queries are replaced by $\mathrm{O}_{\!f_0}$. 
    	The trace distance of these states has
    	\[ \frac{1}{2}\nrm{\vrho_{\CF}-\vrho_{f_0}}_1 \leq T \max_{b\in B}\sqrt{\BE[|\exp(\pi \ri f(b))-\exp(\pi \ri f_0(b))|^2]}
    	\leq 2T \max_{b\in B} \sqrt{\Pr[f(b)\neq f_0(b)]}.\]
    \end{restatable}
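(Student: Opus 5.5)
The argument is the standard BBBV hybrid argument, adapted so that the oracle is random and possibly correlated with the algorithm's internal randomness only through the choice of $f$. Write the algorithm as a sequence of unitaries $U_T \mathrm{O}\, U_{T-1}\mathrm{O}\cdots U_1\mathrm{O}\,U_0$ acting on $\CH\otimes\CW$, where $\CW$ is a workspace. The control qubit of a controlled query is absorbed into $\CW$ by viewing the controlled oracle as $\mathrm{O}_{\!f}$ on the ``1'' branch; this can only decrease differences. The initial state is a fixed pure state $\ket{\psi_0}$; a mixed start and measurements are handled by purification and deferred measurement, and the final trace distance can only shrink under a partial trace or a channel. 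For a fixed $f$, let $\ket{\psi^f_T}$ be the final state. The first step is the pure-state bound for a fixed $f$:
\[
\nrm{\ket{\psi^f_T}-\ket{\psi^{f_0}_T}} \le \sum_{t=1}^{T}\nrm{(\mathrm{O}_{\!f}-\mathrm{O}_{\!f_0})\ket{\phi_t}},
\]
where $\ket{\phi_t}$ is the state just before the $t$-th query in the $f_0$-run. We obtain this by the telescoping hybrid sequence: replace the oracles one at a time, and use unitary invariance and the triangle inequality.

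Next, expand $\ket{\phi_t}=\sum_b \ket{b}\ket{w_{t,b}}$ with weights $p_{t}(b)=\nrm{\ket{w_{t,b}}}^2$ summing to $1$. This gives
\[
\nrm{(\mathrm{O}_{\!f}-\mathrm{O}_{\!f_0})\ket{\phi_t}}^2=\sum_b p_t(b)\,|e^{\pi\ri f(b)}-e^{\pi\ri f_0(b)}|^2 .
\]
The crucial point is that $\ket{\phi_t}$ is computed with $f_0$ and therefore does not depend on the random $f$. Averaging over $f$, Jensen (concavity of the square root) yields
\[
\BE_f\nrm{(\mathrm{O}_{\!f}-\mathrm{O}_{\!f_0})\ket{\phi_t}}\le \sqrt{\sum_b p_t(b)\,\BE_f|e^{\pi\ri f(b)}-e^{\pi\ri f_0(b)}|^2}\le \max_b\sqrt{\BE_f|e^{\pi\ri f(b)}-e^{\pi\ri f_0(b)}|^2}.
\]
Summing over $t$ bounds $\BE_f\nrm{\ket{\psi^f_T}-\ket{\psi^{f_0}_T}}$ by $T$ times this maximum.

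Third, convert to the mixed output. We have $\vrho_{\CF}=\BE_f\ketbra{\psi^f_T}{\psi^f_T}$, and by convexity of the trace norm,
\[
\tfrac12\nrm{\vrho_\CF-\vrho_{f_0}}_1\le \BE_f \tfrac12\nrm{\psi^f_T-\psi^{f_0}_T}_1 .
\]
For pure states, $\tfrac12\nrm{\ketbra{u}{u}-\ketbra{v}{v}}_1=\sqrt{1-|\braket{u|v}|^2}\le\nrm{u-v}$, which gives the first inequality. For the second inequality, $f,f_0$ are $\{0,1\}$-valued, so $|e^{\pi\ri f(b)}-e^{\pi\ri f_0(b)}|^2=4\cdot\chi[f(b)\neq f_0(b)]$. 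Its expectation is $4\Pr[f(b)\ne f_0(b)]$, and taking the square root gives the factor $2$.

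The main subtlety is the Jensen step: one must make sure the hybrid states $\ket{\phi_t}$ are taken from the $f_0$-run, so that they are independent of the random $f$. If they were taken from the $f$-run, the weights $p_t(b)$ would correlate with $f$ and the max-over-$b$ bound would fail. Handling controlled queries and the purification of intermediate measurements is routine.
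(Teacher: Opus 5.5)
Your proposal is correct and is essentially the paper's proof. You use the same telescoping hybrid bound with intermediate states taken from the $f_0$-run (hence independent of the random $f$), the same per-state computation giving $\max_{b}\BE\bigl[|e^{\pi\ri f(b)}-e^{\pi\ri f_0(b)}|^2\bigr]$, and the same reduction from mixed outputs to $\BE_f\nrm{|\psi^f_T\rangle-|\psi^{f_0}_T\rangle}$ via convexity and the pure-state bound; the only cosmetic differences are that you apply Jensen to each of the $T$ terms separately where the paper uses Cauchy--Schwarz over the sum followed by a single Jensen step, and that you handle control as a branch where the paper adds a dummy label $b_0$ with $f(b_0)=0$, which gives the identical bound.
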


    Let $f_0$ be the function that is all-$1$ except on indices that were revealed by the $S$ samples. Since $f$ is a random low-Hamming-weight function, for every bit $f(b)$ the probability that it is set to $0$ is at most $\frac{\delta^2}{4} 2^{\frac{-2n}{3}}$. 
    If the algorithm makes at most $2^{\frac{n}{3}}$ queries, then according to \autoref{lem:smallHybLow}, the algorithm's output cannot change by more than probability $\delta$ when we replace the true $f$ queries by the known dummy $f_0$. Thus, if the protocol only succeeds with probability $<1-2\delta$ after receiving the samples and running the algorithm using dummy $f_0$ queries, it would not be able to achieve success probability at least $1-\delta$ even using the proper $f$ (quantum) queries. However, since the dummy queries reveal absolutely no information about the distribution, to achieve such high success probability we need the stated number of samples. 
\end{proof}

Note that the above lower bound is stated im terms of discrete queries. However, this also implies essentially the same bound when we consider the total Hamiltonian simulation time as the complexity measure, because it has been shown that a $T$-fractional query algorithm can be simulated to trace distance $\delta$ by $\bigO{T\log(T/\delta)\log(1/\delta)}$ discrete queries~\cite{cleve2008EfficientDiscContQuery}.

\subsection*{Acknowledgements}

CFC is supported by a Simons-CIQC postdoctoral fellowship through NSF QLCI Grant No. 2016245. AG is supported by the Lendület ``Momentum'' program of the Hungarian Academy of Sciences under grant agreement no. LP2025-8/2025. The authors are grateful for insightful discussions with Anurag Anshu, Richard Allen, Ryan Babbush, Sitan Chen, Soonwon Choi, Hsin-Yuan Huang, William Huggins, Robbie King, Daniel Mark, Ankur Moitra, Anand Natarajan, Nicholas Rubin, Matteo Votto, Alec White, and John Wright. We also thank Jiaqing Jiang for the discussion on related work. 

\addcontentsline{toc}{section}{References}
\bibliographystyle{alphaUrlePrint.bst}
\bibliography{ref,qc_gily}

\addtocontents{toc}{\protect\let\protect\numberline\protect\appendixnumberline}
\appendix

\section{Concentration of the average of subsequent measurements}\label{apx:repeatTailBond}

\begin{prop}[{Chernoff-Hoeffding Bound \cite{chernoff1952Bound},
		\cite[Theorem~1]{hoeffding1963ProbIneqSumsOfBoundedRVs},
		\cite[Section~2.6]{stephane2013ConcIneqThIndep}}]
	\label{prop:ChernoffH}
	Let $0\leq\! X\!\leq 1$ be a bounded random variable and $p:=\BE[X]$. Suppose we take $k$ i.i.d.\ samples $X_{i}$ of $X$ and denote the averaged outcome by $S_k=\frac{X_{1}+X_{2}+\cdots+X_{k}}{k}$. Then we have for all $\veps > 0$
	\begin{align}
		\Pr[S_k\geq p+\veps]	&\leq e^{-D_{KL}(p+\varepsilon\parallel p) k} \leq \exp\left(- \frac{\veps^2}{2(p+\veps)}k\right),\label{eq:chernoffMore}\\
		\Pr[S_k\leq p-\veps]	&\leq e^{-D_{KL}(p-\varepsilon\parallel p) k} \leq \exp\left(- \frac{\veps^2}{2p}k\right),\label{eq:chernoffLess}
	\end{align}
	where $D_{KL}(x\parallel p) = x \ln \frac{x}{p} + (1-x) \ln \left (\frac{1-x}{1-p} \right )$ is the Kullback–Leibler divergence between Bernoulli random variables with mean $x$ and $p$ respectively.
\end{prop}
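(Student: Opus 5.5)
The plan is the standard exponential-moment (Chernoff) argument. First bound the moment generating function of a single sample, then use independence to get that of the sum, then apply Markov's inequality and optimize over the free parameter. This reproduces the KL form, and the two simpler closed-form bounds follow from elementary lower bounds on $D_{KL}$.

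\textbf{Step 1: single-sample MGF.} For $\lambda\in\BR$, convexity of $x\mapsto e^{\lambda x}$ on $[0,1]$ gives $e^{\lambda X}\le (1-X)+Xe^{\lambda}$. Taking expectations,
\begin{align*}
\BE[e^{\lambda X}]\le 1-p+pe^{\lambda}.
\end{align*}
So a bounded variable is dominated in the MGF sense by a Bernoulli$(p)$ variable.

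\textbf{Step 2: tail bound for the sum.} By independence and Markov's inequality, for $\lambda>0$,
\begin{align*}
\Pr[S_k\ge p+\veps]\le e^{-\lambda k(p+\veps)}\bigl(1-p+pe^{\lambda}\bigr)^k .
\end{align*}
I then minimize over $\lambda$. Writing $x=p+\veps$, the optimum is at $e^{\lambda}=\frac{x(1-p)}{p(1-x)}$, which is valid when $x<1$. Substituting gives exactly $e^{-kD_{KL}(x\parallel p)}$. The edge cases need a word:
\begin{itemize}
\item If $x>1$, the probability is $0$.
\item If $x=1$, letting $\lambda\to\infty$ gives $p^k=e^{-kD_{KL}(1\parallel p)}$.
\end{itemize}
The lower tail follows by applying the same argument to $1-X$, whose mean is $1-p$, and using $D_{KL}(1-x\parallel 1-p)=D_{KL}(x\parallel p)$.

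\textbf{Step 3: simplified exponents.} It remains to show $D_{KL}(p+\veps\parallel p)\ge \frac{\veps^2}{2(p+\veps)}$ and $D_{KL}(p-\veps\parallel p)\ge\frac{\veps^2}{2p}$. I would use the integral representation
\begin{align*}
D_{KL}(x\parallel p)=\int_p^x \frac{x-t}{t(1-t)}\,\rd t .
\end{align*}
This holds because the function of $x$ vanishes at $x=p$ together with its first derivative, and its second derivative is $\frac{1}{x(1-x)}$.
\begin{itemize}
\item For $x=p+\veps>p$, use $t(1-t)\le t\le x$ on the range of integration. This gives $D_{KL}\ge\int_p^x\frac{x-t}{x}\rd t=\frac{\veps^2}{2x}$.
\item For $x=p-\veps<p$, the integral runs over $t\in[x,p]$ with integrand $\frac{t-x}{t(1-t)}\ge\frac{t-x}{p}$. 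This gives $D_{KL}\ge\frac{\veps^2}{2p}$.
\end{itemize}

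\textbf{Main obstacle.} The only delicate step is Step 3, specifically checking that the sign conventions and the orientation of the integral are right in both tails. Everything else is routine.
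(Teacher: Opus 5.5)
Your proof is correct and is essentially the standard argument. The paper gives no proof of its own and only cites Hoeffding's Theorem~1 and Boucheron--Lugosi--Massart. Their proof is exactly your route: dominate the moment generating function of $X$ by that of a Bernoulli$(p)$ variable via convexity, then optimize the Chernoff bound to get the KL exponent. The simplified exponents are the usual second-order lower bounds on the binary KL divergence, which is what your Step~3 proves. The only omission is the degenerate case $p\in\{0,1\}$, where your formula for the optimal $e^{\lambda}$ is undefined; there $X$ is almost surely constant, so both tail probabilities are $0$ and the bounds hold trivially.
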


\begin{prop}[Bennett-Bernstein Bound {\cite[Theorem 2.9 \& Eqn.\ 2.10]{stephane2013ConcIneqThIndep}}]\label{prop:BennettB}
	Let $B_{i}\colon i\in [k]$ be independent real random variables such that, for each $i$,  $B_{i}\leq b$ for some $b>0$ almost surely. Let 
	\begin{align*}
		D_k=\sum_{i=1}^k B_{i}-\mathbb{E}[B_{i}], \qquad v= \sum_{i=1}^k \mathbb{E}[(B_i)^2],
	\end{align*}
	then for any $t>0$,
	\begin{align*}
		\Pr[D_k\geq t]
		\leq\exp\left(-\frac{v}{b^2}h\left(\frac{bt}{v}\right)\right)
		\leq\exp\left(-\frac{t^2}{2v+\frac{2}{3}bt}\right),
	\end{align*}	
	where $h(x)=(1+x)\ln(1+x)-x$.
\end{prop}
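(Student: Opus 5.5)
The plan is the standard Cramér--Chernoff exponential-moment argument. We bound the moment generating function of each summand using only the one-sided bound $B_i\le b$ and the second moment $\mathbb{E}[B_i^2]$, and then optimize over the free parameter. Fix $\lambda>0$. Markov's inequality applied to $e^{\lambda D_k}$, together with independence, gives
\begin{align*}
\Pr[D_k\ge t]\le e^{-\lambda t}\,\mathbb{E}\big[e^{\lambda D_k}\big]=e^{-\lambda t}\prod_{i=1}^k e^{-\lambda\mathbb{E}[B_i]}\,\mathbb{E}\big[e^{\lambda B_i}\big].
\end{align*}
So everything reduces to a one-variable bound on $\log\mathbb{E}[e^{\lambda B_i}]-\lambda\mathbb{E}[B_i]$.

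The key step is the following. Define $\phi(u):=(e^u-1-u)/u^2$ for $u\neq 0$ and $\phi(0)=\tfrac12$. We use the elementary fact that $\phi$ is nondecreasing on $\mathbb{R}$; this can be checked by writing $\phi(u)=\int_0^1(1-s)e^{su}\,\rd s$. Since $\lambda B_i\le\lambda b$ almost surely, we get
\begin{align*}
e^{\lambda B_i}=1+\lambda B_i+\lambda^2B_i^2\,\phi(\lambda B_i)\le 1+\lambda B_i+\lambda^2B_i^2\,\phi(\lambda b).
\end{align*}
We take expectations and use $\log(1+x)\le x$, which yields
\begin{align*}
\log\mathbb{E}\big[e^{\lambda B_i}\big]-\lambda\mathbb{E}[B_i]\le\mathbb{E}[B_i^2]\,\frac{e^{\lambda b}-1-\lambda b}{b^2}.
\end{align*}
Summing over $i$ then gives $\log\Pr[D_k\ge t]\le -\lambda t+\frac{v}{b^2}\big(e^{\lambda b}-1-\lambda b\big)$. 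This step is the only place where care is needed. We need monotonicity of $\phi$ on all of $\mathbb{R}$, including negative arguments, because $B_i$ is bounded only from above. It is also why $v$ can be taken to be the uncentered second moment: we expand $e^{\lambda B_i}$ directly rather than $e^{\lambda(B_i-\mathbb{E}[B_i])}$.

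Next we optimize the exponent over $\lambda>0$. Setting the derivative to zero gives $e^{\lambda b}=1+\frac{bt}{v}$, that is, $\lambda=\frac1b\ln\!\big(1+\frac{bt}{v}\big)$. Substituting and writing $x=bt/v$, the exponent becomes
\begin{align*}
-\frac{v}{b^2}\big((1+x)\ln(1+x)-x\big)=-\frac{v}{b^2}h(x),
\end{align*}
which is the first inequality.

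Finally, we use the classical estimate $h(x)\ge\frac{x^2}{2(1+x/3)}$ for $x\ge0$. It can be proved by checking that $g(x)=h(x)-\frac{x^2}{2(1+x/3)}$ satisfies $g(0)=g'(0)=0$ and $g''\ge0$; equivalently, one compares $h''(x)=\frac1{1+x}$ with the second derivative of the right-hand side. Plugging in $x=bt/v$ then gives
\begin{align*}
\frac{v}{b^2}h\Big(\frac{bt}{v}\Big)\ge\frac{t^2}{2v+\frac23bt},
\end{align*}
which is the second inequality.
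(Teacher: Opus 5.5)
The paper gives no proof of this proposition; it cites Boucheron--Lugosi--Massart (Theorem 2.9 and Eq.~(2.10)). Your proposal is correct and is essentially the proof given in that reference: a Cram\'er--Chernoff bound using monotonicity of $(e^u-1-u)/u^2$ on all of $\mathbb{R}$, then $\log(1+x)\le x$, the optimal choice $\lambda=\frac1b\ln(1+bt/v)$, and finally $h(x)\ge \frac{x^2}{2(1+x/3)}$ to obtain the Bernstein form.
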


\begin{prop}[Median of means{~\cite{Jerrum1986RandomGO}}]\label{lem:MOM}
    Let $X$ be a random variable with variance $\sigma^2.$ Then, $K$
    independent sample means of size $N = 34\sigma^2/ \epsilon^2$ suffice to construct a median of means estimator 
    $\hat{\mu}(N, K)$ such that
    \begin{align*}
    \Pr [\labs{ \hat{\mu}(N, K) - \BE[X]} \ge \epsilon ] \le 2e^{-K/2}.    
    \end{align*}
\end{prop}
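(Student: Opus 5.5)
The proof combines Chebyshev's inequality for each individual sample mean with a Chernoff bound on the number of ``bad'' sample means, using the KL form of \autoref{prop:ChernoffH}.

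\textbf{Step 1: each sample mean is good with probability at least $33/34$.} Let $\bar X_1,\ldots,\bar X_K$ be the $K$ independent sample means, each an average of $N$ i.i.d.\ copies of $X$. Set $\hat\mu(N,K)=\mathrm{median}(\bar X_1,\ldots,\bar X_K)$. Each $\bar X_j$ has mean $\BE[X]$ and variance $\sigma^2/N=\epsilon^2/34$. Chebyshev's inequality then gives
\begin{align*}
\Pr\left[\labs{\bar X_j-\BE[X]}\ge\epsilon\right]\le \frac{\sigma^2}{N\epsilon^2}=\frac{1}{34}=:p_0 .
\end{align*}
It is cleaner to split this event by side. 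Let $Y_j^{+}$ indicate $\bar X_j\ge \BE[X]+\epsilon$ and $Y_j^{-}$ indicate $\bar X_j\le \BE[X]-\epsilon$. Each family consists of i.i.d.\ Bernoulli variables with mean $p\le p_0$.

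\textbf{Step 2: reduce failure of the median to a counting event.} If $\hat\mu\ge\BE[X]+\epsilon$, then at least $\lceil K/2\rceil$ of the $\bar X_j$ lie at or above $\BE[X]+\epsilon$, so $\frac1K\sum_j Y_j^+\ge \frac12$. The symmetric statement holds for the lower side. A union bound over the two sides gives
\begin{align*}
\Pr\left[\labs{\hat\mu-\BE[X]}\ge\epsilon\right]\le \Pr\Big[\tfrac1K\textstyle\sum_j Y^+_j\ge \tfrac12\Big]+\Pr\Big[\tfrac1K\textstyle\sum_j Y^-_j\ge \tfrac12\Big].
\end{align*}
This union bound is the source of the factor $2$ in the statement.

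\textbf{Step 3: Chernoff bound on each side.} Apply the KL form of \eqref{eq:chernoffMore} from \autoref{prop:ChernoffH}, with $p+\veps=\frac12$. Each term is then at most $e^{-K D_{KL}(1/2\parallel p)}$. Since $D_{KL}(1/2\parallel p)$ is decreasing in $p$ for $p<1/2$, we may replace $p$ by $p_0=1/34$. Alternatively, a binomial with $p\le p_0$ is stochastically dominated by one with $p=p_0$, which gives the same replacement.

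\textbf{Step 4: evaluate the exponent.} A short calculation gives
\begin{align*}
D_{KL}\!\left(\tfrac12\,\middle\|\,\tfrac1{34}\right)=\tfrac12\ln\!\frac{17\cdot 17}{33}=\tfrac12\ln(8.75\ldots)\approx 1.08\ \ge\ \tfrac12 .
\end{align*}
Each side therefore contributes at most $e^{-K/2}$, and the total failure probability is at most $2e^{-K/2}$.

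\textbf{Main obstacle.} The only delicate point is the constant in the exponent. The weaker quadratic Hoeffding exponent $2(\tfrac12-\tfrac1{34})^2\approx0.44$ falls just short of $\tfrac12$. So one must use the KL form of \autoref{prop:ChernoffH}, not its quadratic relaxation, and verify the numerical inequality in Step 4. If that margin ever proved too tight, the fallback would be to enlarge the constant $34$, but with the KL form it is not needed.
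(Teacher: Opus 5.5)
Your proof is correct. Chebyshev bounds each block's failure probability by $1/34$. An $\epsilon$-off median forces at least $K/2$ blocks to fail on the same side. The relative-entropy form of \autoref{prop:ChernoffH}, with $D_{KL}(\tfrac12\,\|\,\tfrac1{34})=\tfrac12\ln(289/33)\approx 1.085\ge\tfrac12$, then closes the argument. You are also right that the quadratic Hoeffding exponent $2(8/17)^2\approx 0.443$ would not suffice.

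The paper gives no proof of its own; it only cites Jerrum--Valiant--Vazirani for this standard fact. Your Chebyshev-plus-Chernoff argument is the standard proof behind that citation. The only cosmetic caveat is that $N$ should be read as $\lceil 34\sigma^2/\epsilon^2\rceil$.
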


\begin{lem}\label{lem:measuremetnTailBound}
	Let $c\in(0,1]$, and $X_0,X_1,X_2,\dots$ be (not necessarily independent) random variables taking values in $\big\{-\frac2c,\,0,\,\frac2c\big\}$, 
	such that almost surely
	\[
	\Pr[X_{i}> 0 \mid X_0,\dots,X_{i-1}]\le c \quad\text{and}\quad \Pr[X_{i}< 0 \mid X_0,\dots,X_{i-1}]\le c.
	\]
	Then the sample mean
	$S_k=\frac{X_{0}+X_{1}+\cdots+X_{k-1}}{k}$ satisfies $\BE[S_k^2]\leq8+\frac{8}{ck}$ and for all $\eta\in(0,\frac14]$, $k\ge \frac{8}{3}\frac{\ln(2/\eta)}{c}$ we have
	\[
	\Pr[|S_k|\ge 4]\le \eta \qquad \text{and} \qquad
	\mathbb E\bigl[|S_k|-\min(4,|S_k|)\bigr]\le\eta.
	\]
\end{lem}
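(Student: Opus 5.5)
Write $X_i=\frac{2}{c}(P_i-N_i)$, where $P_i=\mathbbm{1}[X_i>0]$ and $N_i=\mathbbm{1}[X_i<0]$. Let $\mathcal{F}_{i-1}=\sigma(X_0,\dots,X_{i-1})$, $p_i=\Pr[P_i=1\mid\mathcal{F}_{i-1}]\le c$ and $q_i=\Pr[N_i=1\mid\mathcal{F}_{i-1}]\le c$. Since $P_i,N_i\ge0$, we have $|S_k|\le\frac{2}{ck}\max(U,V)\le\frac{2}{ck}(U+V)$, where $U=\sum_{i<k}P_i$ and $V=\sum_{i<k}N_i$. So every claim reduces to controlling counts of Bernoulli-type events whose conditional probabilities are at most $c$. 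The natural tool is a stochastic domination (supermartingale) argument: $U$ behaves no worse than a $\mathrm{Binomial}(k,c)$ variable, even though the $X_i$ are dependent.

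\textbf{Second moment.} Since $|S_k|\le \frac{2}{ck}\max(U,V)$, we get $S_k^2\le\frac{4}{c^2k^2}(U^2+V^2)$. Expand $\BE[U^2]=\sum_i\BE[P_i]+2\sum_{i<j}\BE[P_iP_j]$. Conditioning on $\mathcal{F}_{j-1}$ gives $\BE[P_iP_j]=\BE[P_i p_j]\le c\,\BE[P_i]\le c^2$, and $\BE[P_i]\le c$. Hence $\BE[U^2]\le kc+k^2c^2$, and likewise for $V$. Combining, $\BE[S_k^2]\le\frac{8}{c^2k^2}(kc+k^2c^2)=8+\frac{8}{ck}$, which is exactly the claimed bound. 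This step is routine.

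\textbf{Tail bound.} The event $|S_k|\ge4$ is contained in $\{U\ge 2ck\}\cup\{V\ge 2ck\}$. To bound $\Pr[U\ge 2ck]$, I will use the exponential supermartingale $Z_j=\exp(\lambda\sum_{i<j}P_i)\prod_{i<j}(1+c(e^\lambda-1))^{-1}$ for $\lambda>0$. It is a supermartingale because $\BE[e^{\lambda P_j}\mid\mathcal{F}_{j-1}]=1+p_j(e^\lambda-1)\le 1+c(e^\lambda-1)$. Markov's inequality then gives the binomial Chernoff bound
\[
\Pr[U\ge 2ck]\le\exp(-k\,D_{KL}(2c\,\|\,c))\le\exp\Big(-\frac{c^2}{2(2c)}k\Big)=e^{-ck/4},
\]
where the middle inequality mirrors \autoref{prop:ChernoffH}. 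If $2c>1$ the event is empty. To reach the stated constant $\frac{8}{3}$, I will instead use the Bennett/Bernstein form with variance proxy $ck$ and deviation $t=ck$, which gives $\exp(-\frac{(ck)^2}{2ck+\frac23 ck})=\exp(-\frac{3}{8}ck)$. The condition $k\ge\frac{8}{3}\frac{\ln(2/\eta)}{c}$ makes this at most $\eta/2$, and a union bound over $U$ and $V$ yields $\eta$. For this, Bernstein has to be re-derived for the martingale: use the mgf bound $1+p(e^\lambda-1)\le\exp(c(e^\lambda-1))$ and optimize $\lambda=\ln 2$, or use Freedman's inequality.

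\textbf{Truncated expectation; main obstacle.} For the last claim, $\BE[|S_k|-\min(4,|S_k|)]=\int_4^\infty\Pr[|S_k|\ge s]\,ds$. For $s\ge4$, write $s=4(1+u)$, so $|S_k|\ge s$ requires $U$ or $V$ to be at least $2ck(1+u)$. The same supermartingale bound with deviation $t=ck(1+2u)$ decays like $\exp(-\Theta(ck)(1+u))$, roughly $\exp(-\frac38 ck(1+u))$. Integrating $4\int_0^\infty 2e^{-\frac38ck(1+u)}du=\frac{64}{3ck}e^{-\frac38ck}\le\frac{64}{3ck}\cdot\frac{\eta}{2}$. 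Since $ck\ge\frac83\ln(8)$, this is at most $\eta$ only if the constants cooperate. I expect this constant bookkeeping to be the main obstacle, together with getting exactly the $\frac83$ with dependent variables. First I will try the sharper KL/Bennett form $h(x)$, which gives tails like $\exp(-ck\,h(1+2u))$ that decay superexponentially in $u$ and leave ample room. If that still falls short, I will use that $\eta\le\frac14$ forces $ck\ge\frac83\ln 8$, and absorb the constant into the exponent, using $e^{-\frac38ck}\le(\eta/2)$ with an additional factor from the larger deviation.
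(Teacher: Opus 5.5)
Your proof is correct, but it obtains the comparison with $\mathrm{Bin}(k,c)$ by a different mechanism than the paper.

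\textbf{The paper's route.} It constructs an explicit coupling. From i.i.d.\ uniforms $U_i$ it builds $Z'_i$ with the same joint law as $Z_i=\mathbf{1}_{\{X_i>0\}}$ and with $Z'_i\le B_i=\mathbf{1}_{\{U_i\le c\}}$ pointwise. Hence $\sum_i Z_i$ is stochastically dominated by $\mathrm{Bin}(k,c)$. All three claims are then computed for the binomial itself:
\begin{itemize}
\item the second moment is bounded by $8(1+\frac{1-c}{kc})$;
\item the tail comes from quoting the i.i.d.\ Bernstein bound at deviation $xkc$ with $x\ge1$, giving $e^{-\frac38 xkc}$;
\item the truncated mean comes from $\BE[(S_k-4)_+]\le\frac{2}{kc}\BE[(\mathrm{Bin}(k,c)-2kc)_+]$ plus tail integration.
\end{itemize}

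\textbf{Your route.} You get the second moment directly from $\BE[P_iP_j]=\BE[P_ip_j]\le c^2$. You get the tails from the exponential supermartingale, i.e.\ from $\BE[e^{\lambda U}]\le(1+c(e^\lambda-1))^k$.

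\textbf{What each buys.} Stochastic domination is the stronger conclusion. Every monotone functional transfers at once, and textbook i.i.d.\ inequalities can be cited unchanged. MGF domination only feeds exponential-moment arguments. So you must re-derive Bennett/Bernstein from $\BE[e^{\lambda U}]\le e^{ck(e^\lambda-1)}$, or invoke Freedman. For a general deviation $xck$ you optimize at $\lambda=\ln(1+x)$, not just $\lambda=\ln 2$. In exchange, your route needs no coupling construction. It also avoids the tower-property step that passes from conditioning on $X_0,\dots,X_{i-1}$ to conditioning on $Z_0,\dots,Z_{i-1}$. It works with the filtration of the hypothesis directly, and it is the standard martingale technique that generalizes readily.

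\textbf{The constant problem you anticipate is self-inflicted.} You correctly found the deviation $t=ck(1+2u)$. Bernstein with $x=1+2u\ge1$ gives $\exp(-\frac38ck(1+2u))$. Weakening this to $\exp(-\frac38ck(1+u))$ is what costs you the factor $2$. Keeping it,
$4\int_0^\infty 2e^{-\frac38ck(1+2u)}\,du=\frac{32}{3ck}e^{-\frac38ck}\le\frac{16}{3ck}\,\eta\le\frac{2}{\ln 8}\,\eta<\eta$,
since $ck\ge\frac83\ln(2/\eta)\ge\frac83\ln 8$. This is exactly the paper's bound $\frac{16}{3kc}e^{-\frac38kc}\le\frac{\eta}{2}$ for each sign. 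Your Bennett-$h$ fallback also works, with room to spare.

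Likewise, in the tail step the KL exponent $kD_{KL}(2c\parallel c)$ already exceeds $ck(2\ln2-1)>\frac38ck$, because the binomial MGF is dominated by the Poisson one. Only the relaxation to $e^{-ck/4}$ lost the constant.
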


\begin{proof}
	Since $X_i\in\big\{-\frac2c,\,0,\,\frac2c\big\}$, for $Z_i:= \mathbf{1}_{\{X_i> 0\}}$ we have
	\begin{align}\label{eq:measureConcIndicators}
		S_k
		= \frac1k\sum_{i=0}^{k-1} X_i
		\le \frac{2}{ck}\sum_{i=0}^{k-1} Z_i.
	\end{align}
	
	The main idea is that we construct a coupling between the binary random variables $Z_i=\mathbf{1}_{\{X_i> 0\}}$ and $k$ independent binary Bernoulli random variables $B_i$ such that $\BE[B_i]=c$. We realize the random variable $B_i$ using i.i.d.\ uniform random variables $U_i$ over $[0,1]$, by simply setting $B_i=\mathbf{1}_{\{U_i\leq c\}}$.
	
	We define the random variable $Z'_i$ recursively as follows. Let $Z'_0:=\mathbf{1}_{\{U_0\leq \Pr[Z_0]=1\}}$ and let
	\[
	Z'_{i}:=\mathbf{1}_{\{U_{i}\leq \Pr[Z_{i}=1|(Z_0,Z_1,\ldots,Z_{i-1})=b]\}} \text{ if }(Z'_0,Z'_1,\ldots,Z'_{i-1})=b\in\{0,1\}^i.
	\]
	One can easily see that the law of $(Z_i)_{i=0}^k$ and $(Z'_i)_{i=0}^k$ coincide, since
	\begin{align*}     
		\Pr[(Z'_0,\ldots,Z'_{k-1})=b]
		&=\prod_{\ell=0}^{k-1} \Pr[Z'_\ell=b_\ell|(Z'_0,\ldots,Z'_{\ell-1})=(b_0,\ldots,b_{\ell-1})]\\
		&=\prod_{\ell=0}^{k-1} \Pr[Z_\ell=b_\ell|(Z_0,\ldots,Z_{\ell-1})=(b_0,\ldots,b_{\ell-1})]
		=\Pr[(Z_0,\ldots,Z_{k-1})=b].
	\end{align*}   	
	On the other hand, the coupling is constructed in such a way that $Z'_i\leq B_i$. Therefore, we have that 
	\begin{align}\label{eq:Coupling}
		\sum_{i=0}^{k-1} Z'_i \le \sum_{i=0}^{k-1} B_i\sim \mathrm{Bin}(k,c).
	\end{align}
	
	By symmetry, \autoref{eq:measureConcIndicators} and the above coupling and \autoref{eq:Coupling} we have 
	\begin{align*}
		\BE[S_k^2]
		=\BE[(S_k)_+^2]+\BE[(S_k)_-^2]
		\leq 2 \frac{4}{(kc)^2}\BE[(\mathrm{Bin}(k,c))^2]
		= 8\frac{\BE[\mathrm{Bin}(k,c)]^2+\mathrm{Var}[\mathrm{Bin}(k,c)]}{(kc)^2}
		= 8\Big(1+\frac{1\!-\!c}{kc}\Big).\!
	\end{align*}

	By Bernstein's inequality~(\autoref{prop:BennettB}, setting $b=1, v=\BE[\mathrm{Bin}(k,c)]=ck,t=xkc$) we have
	\begin{align}\label{eq:BernsteinBinTail}
		\Pr[\mathrm{Bin}(k,c)-kc\ge xkc]
		\le
		\exp\bigg( -\frac{(xkc)^2}{2kc+\frac{2}{3}xkc} \bigg)=\exp\bigg( -\frac{xkc}{\frac2x+\frac{2}{3}} \bigg).
	\end{align}
	Setting $x=1$, and combining this with our coupling argument and Eqs.~\eqref{eq:measureConcIndicators},\eqref{eq:Coupling} we can conclude
	\[
	\Pr[S_k\ge4]\le e^{-3ck/8}\le \frac{\eta}{2} \quad \text{(by symmetry the same bound holds for }-\!S_k).
	\]
	Similarly, combining our coupling argument and Eqs.~\eqref{eq:measureConcIndicators},\eqref{eq:Coupling} gives
	\begin{align*}
		\mathbb E\bigl[(S_k-4)_+\bigr] &
		\leq \BE\left[\left(\frac{2\mathrm{Bin}(k,c)}{kc}-4\right)_{\!\!+}\right] \tag{by Equations~\eqref{eq:measureConcIndicators},\eqref{eq:Coupling}}\\&
		=\frac{2}{kc}\BE\left[\left(\mathrm{Bin}(k,c)-2kc\right)_{\!+}\right]\\&
		=\frac{2}{kc}\int_0^\infty \Pr[\mathrm{Bin}(k,c)-2kc\ge u]\,du \tag{using tail integration}\\&
		=2\int_1^\infty \Pr[\mathrm{Bin}(k,c)-kc\ge xkc]\,dx \tag{change of variables: $x=1+\frac{u}{kc}$}\\&
		\le2\int_1^\infty \exp\bigg( -\frac{3}{8}xkc \bigg)\,dx \tag{by \autoref{eq:BernsteinBinTail}}\\&
		=\frac{16}{3kc}\exp\bigg( -\frac{3}{8}kc \bigg)\leq \frac{\eta}{2}. \tag*{\qedhere}
	\end{align*}
\end{proof}

\section{Learning distributions given by uniform sampling over a subset}\label{apx:stringSampLower}

\begin{restatable}{lem}{stringSampLower}\label{lem:stringSampLower}
	Let $f\colon\Delta_m \to \mathbb{R}^k$ be $L$-Lipschitz with respect to the metric $D$ and total variation distance:
	\[
	D(f(p),f(q)) \le L \cdot d_{\mathrm{TV}}(p,q)
	\qquad\text{for all } p,q\in\Delta_m.
	\]
	Assume that any algorithm which, given i.i.d.\ samples from an unknown distribution $p\in\Delta_m$, outputs a value $v$ such that
	\[
	D(f(p),v) \le 2\veps
	\]
	with probability at least $1-3\delta$, must use at least $S+1$ samples.
	
	Let
	\[
	K \;\ge\; \max\left\{\frac{Lm}{2\veps},\frac{Sm}{2\delta},\frac{S^2}{2\delta}\right\}, \qquad \text{and} \qquad n\geq \log_2(m K). 
	\]
    Given an unknown distribution $p'\in\Delta_m$ on $m$ elements, partition $\{0,1\}^n$ into $m$ bins $B_1,\dots,B_m$ of sizes differing by at most $1$. 
    Suppose that there is a uniformly random (unkown) subset $I\subseteq\{0,1\}^n$ with $|I|= K$, so that
	$p'_i \;=\; \frac{|I\cap B_i|}{|I|}$.
	Then any algorithm that outputs $v$ such that
	\[
	D(f(p'),v) \le \veps
	\]
	with probability at least $1-\delta$, must use at least $S+1$ i.i.d.\ uniform random samples from elements of $I$.
\end{restatable}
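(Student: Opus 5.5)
We argue by contraposition: from an algorithm $\mathcal{A}'$ that uses at most $S$ uniform samples from $I$ and outputs $v$ with $D(f(p'),v)\le\veps$ with probability at least $1-\delta$, we build an algorithm $\mathcal{A}$ that uses at most $S$ i.i.d.\ samples from an arbitrary $p\in\Delta_m$ and outputs $v$ with $D(f(p),v)\le 2\veps$ with probability at least $1-3\delta$. This contradicts the hypothesis. The reduction has three error terms, matching the three lower bounds on $K$: a discretization error (controlled by $\frac{Lm}{2\veps}$), a sampling-mismatch error (controlled by $\frac{Sm}{2\delta}$), and a collision error (controlled by $\frac{S^2}{2\delta}$).

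\emph{Step 1 (rounding).} Given $p\in\Delta_m$, we round it to a vector $(k_1,\dots,k_m)$ of nonnegative integers with $\sum_i k_i=K$ and $|k_i-Kp_i|<1$. Then $q:=k/K$ satisfies $d_{\mathrm{TV}}(p,q)\le \frac{m}{2K}\le\frac{\veps}{L}$, so by the Lipschitz property $D(f(p),f(q))\le\veps$. We must ensure $k_i\le|B_i|$. This holds because $n\ge\log_2(mK)$ gives $|B_i|\ge\lfloor 2^n/m\rfloor\ge K$.

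\emph{Step 2 (simulation).} The algorithm $\mathcal{A}$ receives $S$ samples $x_1,\dots,x_S\sim p$. For each $j$ it outputs a uniformly random element of $B_{x_j}$, choosing a fresh element the first time a bin is hit and reusing a previously produced element with the appropriate probability. The aim is that these outputs are distributed as $S$ uniform samples from a uniformly random $I$ whose bin counts are $|I\cap B_i|=k_i$.

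The cleanest way to organize this is to compare both processes with the same \emph{distinct-element} idealization, in which the $S$ samples from $I$ happen to be pairwise distinct.
\begin{itemize}
\item Under the true process, $S$ uniform draws from $I$ (with $|I|=K$) collide with probability at most $\binom{S}{2}/K\le\frac{S^2}{2K}\le\delta$. Conditioned on no collision, the bin labels form sampling without replacement from the multiset given by $k$.
\item Sampling without replacement from $k$ is within total variation $\sum_j j/K\le S^2/(2K)$ of i.i.d.\ sampling from $q$. Alternatively one can bound the per-draw mismatch by $m/K$ per bin, giving at most $Sm/(2K)\le\delta$; either budget suffices.
\item The labels $x_j\sim p$ differ from i.i.d.\ samples of $q$ by at most $S\,d_{\mathrm{TV}}(p,q)\le\frac{Sm}{2K}\le\delta$ in total variation. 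This is exactly where the $\frac{Sm}{2\delta}$ condition enters.
\item Given distinct labels with the right bin counts, uniformly random distinct elements within the bins, together with a uniformly random completion of $I$ (which is never observed), reproduce the true conditional law. This uses the symmetry of the uniformly random subset $I$ given its bin counts.
\end{itemize}

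\emph{Step 3 (conclusion).} Summing the errors, the transcript fed to $\mathcal{A}'$ is within total variation $2\delta$ of the genuine input law, for an $I$ with $p'=q$. Hence, with probability at least $1-3\delta$, $\mathcal{A}'$ outputs $v$ with $D(f(q),v)\le\veps$. By Step 1 this gives $D(f(p),v)\le 2\veps$, contradicting the assumed lower bound of $S+1$ samples.

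The main obstacle is making Step 2 rigorous: fixing the correct conditional law of $I$ given its bin counts, and handling repeated samples so that the total variation bookkeeping closes with exactly the three stated budgets. I would first verify this in the idealized collision-free case and then add the collision term separately.
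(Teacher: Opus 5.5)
Your outline matches the paper's: round $p$ to $q=r(p)$ with bin counts $k_i=Kq_i$, feed $\CA'$ fresh strings from the sampled bins, and pay for $p$ versus $q$ and for collisions. Merging the paper's two reductions into one is harmless. The gap is the total-variation bookkeeping in Step~2, which does not close. Your three error terms are:
\begin{itemize}
\item collision, at most $\delta$;
\item sampling without replacement versus i.i.d.\ $q$, at most $S^2/(2K)\le\delta$;
\item $p^{\times S}$ versus $q^{\times S}$, at most $\delta$.
\end{itemize}
These add up to $3\delta$, not the $2\delta$ you state in Step~3. With $3\delta$ you only get that $\CA$ succeeds with probability $1-4\delta$. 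That does not contradict the hypothesis, which only excludes $S$-sample algorithms of confidence $1-3\delta$. The extra $\delta$ comes from passing through the conditioned-on-no-collision law. Conditioning on distinctness turns the bin labels into sampling without replacement, and undoing that costs a second collision-sized term. Your alternative per-draw bound of $m/K$ does not help, because it is false. For example, with $m=2$ and $k=(K/2,K/2)$, after $j$ draws from bin~$1$ the conditional law of the next label is $j/(2(K-j))$ away from $q$. So the summed error is again of order $S^2/K$.

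The missing observation is that no conditioning is needed. For every realization of $I$, the bin labels of i.i.d.\ uniform draws $Y_1,\dots,Y_S$ from $I$ are exactly i.i.d.\ $q$. Given the labels, the draws landing in bin $i$ are i.i.d.\ uniform from $I\cap B_i$, which is a uniformly random $k_i$-subset of $B_i$. Sequentially, such a draw repeats one of the $d$ distinct elements already seen in that bin with probability $d/k_i$, and is otherwise uniform on the unseen part of $B_i$. Couple this with the always-fresh simulation driven by the same i.i.d.\ $q$ labels and the same fresh draws. The two transcripts coincide unless the true process repeats an element. Hence their distance is at most $\Pr[\exists s<t\colon Y_s=Y_t]\le\binom{S}{2}/K\le\delta$, in a single step. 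Adding $S\,d_{\mathrm{TV}}(p,q)\le\delta$ gives $2\delta$, and your contradiction then goes through with exactly the stated constants.

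This also resolves the ambiguity in your simulator. Reusing elements ``with the appropriate probability'' cannot be implemented by $\CA$, since that probability depends on $k_i$, i.e.\ on the unknown $p$. The simulator must always draw fresh; the reuse mechanism appears only in the analysis of the true process.
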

\begin{proof}	
	We first reduce arbitrary distributions to distributions whose probabilities are multiples of $1/K$. Consider the ``rounding function'' $q = r(p)$ that carefully rounds the probabilities to multiples of $1/K$ while keeping their sum equal to~$1$, where $q_i$ is defined recursively as
	\[
	q_1 = \frac{\lfloor p_1 K \rfloor}{K},\qquad
	q_1 + q_2 = \frac{\lfloor (p_1+p_2)\, K \rfloor}{K},\qquad \ldots
	\]
	
	It is easy to see that $|q_i-p_i|\leq 1/K$, and therefore
	\[
	d_{\mathrm{TV}}(p,r(p)) \le \frac{m}{2K}
	\le \min\!\left\{\frac{\delta}{S},\frac{\veps}{L}\right\}.
	\]
	
	Let
	\[
	\CF_K := \bigl\{q\in\Delta_m : q_i \in \tfrac1K\mathbb Z \text{ for all } i\bigr\}.
	\]
	We claim that even on the restricted family $\CF_K$, estimating $f$ to accuracy $\veps$ with confidence $1-2\delta$ requires at least $S+1$ samples.
	
	Indeed, suppose for contradiction that there exists an algorithm $\CA$ using at most $S$ samples such that for every $q\in\CF_K$,
	\[
	\Pr\left[D(\CA(X_1,\dots,X_{S}),f(q))\le \veps\right]\ge 1-2\delta,
	\]
	where $X_1,\dots,X_{S}$ are i.i.d.\ from $q$.
	
	Now let $p\in\Delta_m$ be arbitrary, and set $q:=r(p)$. Run $\CA$ on $S$ i.i.d.\ samples from $p$. By coupling the two one-sample distributions and taking a union bound over the $S$ samples,
	\[
	d_{\mathrm{TV}}\!\left(p^{\times S},q^{\times S}\right)
	\le S\,d_{\mathrm{TV}}(p,q)
	\le \delta.
	\]
	Hence the success probability of $\CA$ for estimating $f(q)$ with samples from $p$ is at least
	\[
	(1-2\delta)-\delta = 1-3\delta.
	\]
	Furthermore, whenever $\CA$'s output $v$ satisfies $D(v,f(q))\le \veps$, by the triangle inequality we have
	\[
	D(v,f(p))
	\le D(v,f(q)) + D(f(q),f(p))
	\le \veps + L\cdot d_{\mathrm{TV}}(p,q)
	\le 2\veps.
	\]
	So we would obtain an $S$-sample algorithm that estimates $f(p)$ to accuracy $2\veps$ with confidence $1-3\delta$, contradicting the assumed lower bound. This proves the claim about $\CF_K$.
	
	We now pass to the subset-sampling model. Fix $q\in\CF_K$. Since each $Kq_i$ is an integer, we can realize $q$ by a subset $I\subseteq\{0,1\}^n$ of size $K$ satisfying
	\[
	|I\cap B_i| = Kq_i
	\qquad\text{for all } i\in[m].
	\]
	Assume, toward contradiction, that there is an algorithm $\CA$ which uses at most $S$ samples from uniformly random elements of $I$, and outputs an $\veps$-accurate estimate of $f(q)$ with probability at least $1-\delta$ when $\CA$ is run for a uniformly random such $I$.
	
	We compare two transcript distributions of length $S$.
	
	\medskip
	\noindent
	\textbf{(1) True subset-sampling transcript.}
	We sample $Y_1,\dots,Y_S$ i.i.d.\ uniformly from the $K$ elements of $I$.
	
	\medskip
	\noindent
	\textbf{(2) Simulated transcript from sample access to $q$.}
	We sample $J_1,\dots,J_S$ i.i.d.\ from $q$. Each time $J_t=i$, we output a fresh uniformly random string from bin $B_i$, never reusing a previously output string from that bin.
	
	\medskip
	
	The second procedure only needs sample access to $q$. Moreover, we can couple the two transcripts such that they only differ if the subset-sampling returns some element of $I$ twice. Thus
	\[
	d_{\mathrm{TV}}\!\left(\mathcal L(Y_1,\dots,Y_S),\mathcal L(\text{simulated transcript})\right)
	\le \Pr[\exists\, s<t:\ Y_s=Y_t].
	\]
	By the union bound,
	\[
	\Pr[\exists\, s<t:\ Y_s=Y_t]
	\le \binom{S}{2}\frac1K
	\le \frac{S^2}{2K}
	\le \delta.
	\]
	Therefore, the simulated procedure also succeeds with probability at least
	\[
	(1-\delta)-\delta = 1-2\delta.
	\]
	
	But this yields an algorithm using at most $S$ ordinary samples from $q\in\CF_K$ that estimates $f(q)$ to accuracy $\veps$ with confidence at least $1-2\delta$, contradicting the lower bound on $\CF_K$ proved above.
	
	Hence, no algorithm using $S$ uniform subset samples can estimate $f(p')$ to accuracy $\veps$ with confidence $1-\delta$.
\end{proof}

\begin{restatable}{cor}{stringSampLowerCor}\label{cor:stringSampLower}
    Let $\veps,\delta\leq\frac18$, $K=\Omega\Big(\frac{m^2+\log^2(1/\delta)}{\veps^4\delta}\Big)$, $n=\log_2(mK)$, and suppose $\{0,1\}^n$ is partitioned into $m$ bins $B_1,\dots,B_m$ of sizes differing by at most $1$.
    Given an unknown distribution $p'=(\frac{k_1}{K}, \frac{k_2}{K}, \ldots, \frac{k_m}{K})$ of rational probabilities on $m$ elements, sample a uniformly random (unkown) subset $I\subseteq\{0,1\}^n$ with $|I|= K$, so that
    $p'_i \;=\; \frac{|I\cap B_i|}{|I|}$.
    Then any algorithm that outputs an estimate $q\in [0,1]^m$ of $p'$ within total variation distance $\veps$ with probability at least $1-\delta$, must use at least $\Omega\Big(\frac{m+\log(1/\delta)}{\veps^2}\Big)$ i.i.d.\ uniform random samples from elements of $I$.
\end{restatable}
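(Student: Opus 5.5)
The plan is to derive \autoref{cor:stringSampLower} directly from \autoref{lem:stringSampLower}. We take $f$ to be the identity map $\Delta_m\to\mathbb{R}^m$, and $D$ to be total variation distance, extended to $[0,1]^m$ as $\frac12\|\cdot\|_1$. With these choices $f$ is $1$-Lipschitz, so $L=1$. The base hypothesis of the lemma is supplied by the folklore distribution-learning lower bound \cite[Theorem 1.3]{cannone2022DistributionTestingSurvey},\cite[Lecture 4]{diakonikolas2019AdvancedLearningTheoryLectureNotes}. Applied with accuracy $2\veps\le\frac14$ and confidence $1-3\delta$ (where $3\delta\le\frac38$ is still bounded away from $1/2$), it says that any algorithm learning $p\in\Delta_m$ needs more than
\[
S:=c_0\frac{m+\log(1/(3\delta))}{(2\veps)^2}-1
\]
samples, for a suitable absolute constant $c_0>0$. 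Since $S=\Theta\big(\frac{m+\log(1/\delta)}{\veps^2}\big)$, the lemma, once its assumptions are met, gives the desired $\Omega\big(\frac{m+\log(1/\delta)}{\veps^2}\big)$ bound for subset sampling.

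The next step is to verify the size condition on $K$, namely
\[
K\ge\max\Big\{\frac{m}{2\veps},\frac{Sm}{2\delta},\frac{S^2}{2\delta}\Big\}.
\]
Substituting $S=O\big(\frac{m+\log(1/\delta)}{\veps^2}\big)$, the binding term is
\[
\frac{S^2}{2\delta}=O\Big(\frac{(m+\log(1/\delta))^2}{\veps^4\delta}\Big)=O\Big(\frac{m^2+\log^2(1/\delta)}{\veps^4\delta}\Big).
\]
The term $\frac{Sm}{2\delta}$ is at most this, because $Sm\le S^2+m^2$ up to constants and $m\le m^2/\veps^4$. The term $\frac{m}{2\veps}$ is trivially dominated. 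Hence the corollary's assumption $K=\Omega(\cdot)$, with a large enough hidden constant, covers all three requirements. The choice $n=\log_2(mK)$ matches the lemma's requirement $n\ge\log_2(mK)$.

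One point of care is that the corollary states the unknown distribution has the rational form $p'=(k_i/K)$ and that $I$ is a uniformly random $K$-subset with the prescribed bin counts. This is exactly the situation the lemma handles: its proof first restricts to $\CF_K$ and then realizes $q$ by a random $I$. So it suffices to check that the hard instances of the folklore bound survive rounding. The lemma's rounding step already does this, at the price of the $2\veps$ and $3\delta$ losses accounted for above. A second point is that the folklore bound is usually stated for a fixed constant confidence or for $\delta$ below an absolute threshold; the assumption $\delta\le\frac18$ ensures $3\delta<\frac12$, so the $\log(1/\delta)$ term remains valid.

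The main obstacle is bookkeeping rather than any new idea. The work lies in tracking the constant-factor degradations $\veps\to2\veps$ and $\delta\to3\delta$ through the lemma, and in confirming that the $\log(1/\delta)$ part of the lower bound survives these. I would handle it by invoking the folklore bound directly at parameters $(2\veps,3\delta)$ and absorbing the resulting constants into the $\Omega$.
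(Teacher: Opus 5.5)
Your proposal is correct and matches the paper's proof: instantiate \autoref{lem:stringSampLower} with $f$ the identity, $D$ the total variation distance, $L=1$, and $S$ taken from the folklore distribution-learning lower bound. Your extra bookkeeping fills in details the paper leaves implicit. Specifically, you invoke the folklore bound at parameters $(2\veps,3\delta)$, use $\veps,\delta\le\frac18$ to keep these in its valid range, and check that $\frac{S^2}{2\delta}$ dominates the three conditions on $K$.
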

\begin{proof}
    Apply \autoref{lem:stringSampLower} for the problem of learning a distribution on $m$ elements to total variation distance $\veps$, i.e., set $f$ to the identity function, $L=1$, and $S=\Omega\bigg(\frac{m+\log(1/\delta)}{\veps^2}\bigg)$ according to the distribution learning lower bounds \cite[Theorem 1.3]{cannone2022DistributionTestingSurvey}, \cite[Lecture 4]{diakonikolas2019AdvancedLearningTheoryLectureNotes}.
\end{proof}

\section{Hybrid method for bounding small probability advantages}\label{apx:smallHybLow}

Now we turn to proving our general lower bound result based on the hybrid method, which also considers the low probability regime. This is a standard technique, but we could not find a readily applicable low-probability variant in the literature, so for completeness, we present a proof here. This technique was originally introduced for proving a lower bound for quantum search by Bennett et al.~\cite{bennett1997QSearchLowerBound}, and can be viewed as a simplified version of the adversary method \cite{ambainis2000AdversaryMethod,hoyer2005AdversaryReview,lee2011QQueryCompStateConv,belovs2015GeneralAdv}.
Our proof closely follows the presentation of the hybrid method in Montanaro's lecture notes \cite[Section 1]{montanaro2011LectureNotes}, and that of~\cite{gilyen2017OptQOptAlgGrad}.

\hybMetArb*
\begin{proof}According to the church of the larger Hilbert space, more precisely due to the triangle and data processing inequalities, we can assume without loss of generality that the algorithm's starting state is pure $\ket{\vec{0}}$ and the algorithm is unitary (i.e., does not involve measurements).
   	%It is well known that in a quantum algorithm all measurements can be postponed to the end of the quantum circuit,
	%so we can assume without loss of generality that between the queries the algorithm acts in a unitary fashion.
	%Thus we can write $\CA=U_T\mathrm{O}'_{\!f}U_{T-1}\mathrm{O}'_{\!f}U_{T-1}\cdots U_1\mathrm{O}'_{\!f}U_0$.
    
	Suppose that $\CF=\{f_j\colon j\in \{1,\ldots,K\}\}$. Let $\CA_j$ denote the algorithm which uses phase oracle $\mathrm{O}_{\!f_j}$ and let $\ket{\psi_j}:=\CA_j\ket{\vec{0}}$ denote the final state of the algorithm. %before the final measurement. 
	Since $\frac{1}{2}\nrm{\ketbra{\psi_j}{\psi_j}-\ketbra{\psi_0}{\psi_0}}_1\leq \nrm{\ket{\psi_j}-\ket{\psi_0}}$, due to the triangle inequality it suffices to bound 
	\begin{align}\label{eq:expectedl2}
		\frac{1}{2}\nrm{\vrho_{\CF}-\vrho_{f_0}}_1
		=\frac{1}{2}\nrm{\BE[\ketbra{\psi_j}{\psi_j}]-\ketbra{\psi_0}{\psi_0}}_1
		\leq \BE[\frac{1}{2}\nrm{\ketbra{\psi_j}{\psi_j}-\ketbra{\psi_0}{\psi_0}}_1]
		\leq \BE[\nrm{\ket{\psi_j}-\ket{\psi_0}}].
	\end{align}
	
	We can assume without loss of generality that $\mathrm{O}_f$ is a controlled oracle, i.e., there is a known element $b_0\in B$ such that $f(b_0)=0$ for all $j\in \{0,1,\ldots, K\}$.
	In general the quantum algorithm might use some workspace $\CW=\underset{w\in W}{\mathrm{Span}}\{\ket{w}: w\in W\}$ along with the Hilbert space $\CH$. With a slight abuse of notation, from now on we account for the presence of this additional Hilbert space by replacing $\mathrm{O}_f$ by $\mathrm{O}_f\otimes I_\CW$ and $B$ by $B\times W$, so that the elements of $\CH\otimes \CW$ can be labeled by the elements of $B$.
	
	Let us define for $t\in \{0,1,\ldots, T\}$
	$$\ket{\psi_j^{(t)}}:=\left(\prod_{\tau=1}^{t}\vU_\tau\mathrm{O}_{\!f_j}\right)\vU_0\ket{\vec{0}},$$
	the state of algorithm $\CA_j$ after making $t$ queries. We now prove by induction that for all $t\in \{0,1,\ldots, T\}$
	\begin{align}
		\label{eq:inductioninequality}
		\nrm{\ket{\psi_j^{(t)}}-\ket{\psi_0^{(t)}}}\leq \sum_{\tau=0}^{t-1} \nrm{(\mathrm{O}_{\!f_j}-\mathrm{O}_{\!f_0})\ket{\psi_0^{(\tau)}}}.
	\end{align}
	For $t=0$ the left-hand side is $0$, so the base case holds. Let us assume that \eqref{eq:inductioninequality} holds for $t-1$, we prove the inductive step as follows:
	\begin{align*}
		\nrm{\ket{\psi_j^{(t)}}-\ket{\psi_0^{(t)}}}
		&=\nrm{\vU_t\mathrm{O}_{\!f_j} \ket{\psi_j^{(t-1)}}-\vU_t\mathrm{O}_{\!f_0}\ket{\psi_0^{(t-1)}}}	\\
		&=\nrm{\mathrm{O}_{\!f_j} \ket{\psi_j^{(t-1)}}-\mathrm{O}_{\!f_0}\ket{\psi_0^{(t-1)}}}	 \tag{since norms are unitarily invariant}  	\\		
		&=\nrm{\mathrm{O}_{\!f_j} \left(\ket{\psi_j^{(t-1)}}-\ket{\psi_0^{(t-1)}}+\ket{\psi_0^{(t-1)}}\right)-\mathrm{O}_{\!f_0}\ket{\psi_0^{(t-1)}}}	   	\\	
		&\leq\nrm{\mathrm{O}_{\!f_j} \left(\ket{\psi_j^{(t-1)}}-\ket{\psi_0^{(t-1)}}\right)} + \nrm{(\mathrm{O}_{\!f_j} -\mathrm{O}_{\!f_0})\ket{\psi_0^{(t-1)}}}	   	\tag{triangle inequality}\\	
		&=\nrm{\ket{\psi_j^{(t-1)}}-\ket{\psi_0^{(t-1)}}} + \nrm{(\mathrm{O}_{\!f_j} -\mathrm{O}_{\!f_0})\ket{\psi_0^{(t-1)}}}	   	\\	
		&\leq \sum_{\tau=0}^{t-1} \nrm{(\mathrm{O}_{\!f_j}-\mathrm{O}_{\!f_0})\ket{\psi_0^{(\tau)}}}. \tag{by the induction hypothesis}
	\end{align*}
	
	Since $\ket{\psi_j}=\ket{\psi_j^{(T)}}$, we additionally have that
	\begin{equation*}
		\nrm{\ket{\psi_j}-\ket{\psi_0}}^2 
		\leq \left(\sum_{\tau=0}^{T-1} \nrm{(\mathrm{O}_{\!f_j}-\mathrm{O}_{\!f_0})\ket{\psi_0^{(\tau)}}}\right)^{\!\!2}
		\leq T\sum_{\tau=0}^{T-1} \nrm{(\mathrm{O}_{\!f_j}-\mathrm{O}_{\!f_0})\ket{\psi_0^{(\tau)}}}^2,
	\end{equation*}
	where the last inequality uses the Cauchy-Schwarz inequality.
	By Jensen's inequality, we also have 
	\begin{align}
		\BE[\nrm{\ket{\psi_j}-\ket{\psi_0}}] 
		= \BE[\sqrt{\nrm{\ket{\psi_j}-\ket{\psi_0}}^2}]
		&\leq \sqrt{T\sum_{\tau=0}^{T-1}\BE\left[ \nrm{(\mathrm{O}_{\!f_j}-\mathrm{O}_{\!f_0})\ket{\psi_0^{(\tau)}}}^2\right]\nonumber}\\
		&\leq T\sqrt{\max_{\tau\in\{0,\ldots,T-1\}}\BE\left[ \nrm{(\mathrm{O}_{\!f_j}-\mathrm{O}_{\!f_0})\ket{\psi_0^{(\tau)}}}^2\right]}.\label{eq:Jensen}
	\end{align}
	We now upper bound the right-hand side of Eq.~\eqref{eq:Jensen} for an arbitrary pure state $\ket{\psi}$ to conclude the proof.
	\begin{align*}
		\BE\left[\nrm{(\mathrm{O}_{\!f_j}-\mathrm{O}_{\!f_0})\ket{\psi}}^2\right]
		&=\BE\left[ \nrm{\left(\sum_{b\in B}\ketbra{b}{b}\right)(\mathrm{O}_{\!f_j}-\mathrm{O}_{\!f_0})\left(\sum_{b'\in B}\ketbra{b'}{b'}\right)\ket{\psi}}^2\right]\\
		&=\BE\left[\nrm{\sum_{b\in B}\ketbra{b}{b}(\mathrm{O}_{\!f_j}-\mathrm{O}_{\!f_0})\ketbra{b}{b}\ket{\psi}}^2 \tag{since $\braket{b |\mathrm{O}_{\!f_j}|b'}=0$ for $b\neq b'$}\right]\\			
		&=\sum_{b\in B} \left|\braket{b|\psi}\right|^2 \BE\left[ \left|\bra{b}(\mathrm{O}_{\!f_j}-\mathrm{O}_{\!f_0})\ket{b}\right|^2\right]\\					
		&\leq \max_{b\in B} \BE\left[ \left|\bra{b}(\mathrm{O}_{\!f_j}-\mathrm{O}_{\!f_0})\ket{b}\right|^2\right]\\					
		&= \max_{b\in B} \BE\left[|\exp(\pi \ri f(b))-\exp(\pi \ri f_0(b))|^2\right]\\ %\tag{note the $B'\rightarrow B$ change}
		&\leq \max_{b\in B} 4 \Pr[f(b)\neq f_0(b)].
	\end{align*}
	Combining this upper bound with Eqs.~\eqref{eq:expectedl2} and \eqref{eq:Jensen}, we obtain the desired inequality.
\end{proof}

\end{document}